\newtheorem{theorem}{Theorem}
\theoremstyle{remark}
\newtheorem{definition}{Definition}
\newtheorem{lemma}{Lemma}
\newtheorem{example}{Example}
\begin{document}

\title{Strongest nonlocal sets with minimum cardinality in multipartite systems}

\author{Hong-Run Li}
\affiliation{School of Mathematical Sciences, Hebei Normal University, Shijiazhuang 050024, China}

\author{Hui-Juan Zuo}
\email{huijuanzuo@163.com}
\affiliation{School of Mathematical Sciences, Hebei Normal University, Shijiazhuang 050024, China}

\author{Fei Shi}
\affiliation{QICI Quantum Information and Computation Initiative, Department of Computer Science, The University of Hong Kong, Pokfulam Road 999077, Hong Kong, China}

\author{Shao-Ming Fei}
\affiliation{School of Mathematical Sciences, Capital Normal University, Beijing 100048, China}

\maketitle

\begin{abstract}
  Quantum nonlocality based on state discrimination describes the global property of the set of orthogonal states and has a wide range of applications in quantum cryptographic protocols. Strongest nonlocality is the strongest form of quantum nonlocality recently presented in multipartite quantum systems: a set of orthogonal multipartite quantum states is strongest nonlocal if the only orthogonality-preserving local measurements on the subsystems in every bipartition are trivial. In this work, we found a construction of strongest nonlocal sets in $\mathbb{C}^{d_{1}}\otimes \mathbb{C}^{d_{2}}\otimes \mathbb{C}^{d_{3}}$ $(2\leq d_{1}\leq d_{2}\leq d_{3})$ of size $d_2d_3+1$ without stopper states. Then we obtain the strongest nonlocal sets in four-partite systems with $d^3+1$ orthogonal states in $\mathbb{C}^d\otimes \mathbb{C}^{d}\otimes \mathbb{C}^{d}\otimes \mathbb{C}^{d}$ $(d\geq2)$ and $d_{2}d_{3}d_{4}+1$ orthogonal states in $\mathbb{C}^{d_{1}}\otimes \mathbb{C}^{d_{2}}\otimes \mathbb{C}^{d_{3}}\otimes \mathbb{C}^{d_{4}}$ $(2\leq d_{1}\leq d_{2}\leq d_{3}\leq d_{4})$. Surprisingly, the number of the elements in all above constructions perfectly reaches the recent conjectured lower bound and reduces the size of the strongest nonlocal set in $\mathbb{C}^{d}\otimes \mathbb{C}^{d}\otimes \mathbb{C}^{d}\otimes \mathbb{C}^{d}$ of [\href{https://doi.org/10.1103/PhysRevA.108.062407}{Phys. Rev. A \textbf{108}, 062407 (2023)}] by $d-2$. In particular, the general optimal construction of the strongest nonlocal set in four-partite system is completely solved for the first time, which further highlights the theory of quantum nonlocality from the perspective of state discrimination.
\end{abstract}

\section{Introduction}\label{sec1}
Quantum nonlocality is one of the most fundamental feature in quantum mechanics. The usual Bell nonlocality is detected by the violation of Bell inequalities by quantum entangled states. In 1999, Bennett {\it et al}. \cite{Bennett1999} first presented an example of locally indistinguishable orthogonal product basis in $\mathbb{C}^{3}\otimes \mathbb{C}^{3}$, exhibiting the phenomenon of ``quantum nonlocality without entanglement". A set of orthogonal quantum states is locally indistinguishable if it is not possible to distinguish the states under local operations and classical communications (LOCC). Such nonlocality based on state discrimination is different from Bell nonlocality. Since then, the nonlocality based on state discrimination has been extensively studied \cite{Walgate2000,Ghosh2001,Walgate2002,Horodecki2003,Fan2004,Niset2006,Zhang2014,Li2015,Yu2015,Wang2015,Xu2016,WangYL2017,Zhang2017,Li2018,Halder2018,Jiang2020,Zuo2021,Xu2021,Zhen2022}, with wide applications in quantum cryptographic protocols such as quantum data hiding \cite{Terhal2001, Divincenzo2002, Eggeling2002} and quantum secret sharing \cite{Hillery1999, Rahaman2015, Wang2017}.

In 2019, Halder {\it et al}. \cite{Halder2019} introduced the concept of local irreducibility and strong nonlocality. A set of orthogonal quantum states is said to be locally irreducible, if it is not possible to eliminate one or more states from the set by orthogonality-preserving local measurements (OPLMs). A locally irreducible set must be a locally indistinguishable set, but the converse is not true except when it contains only three orthogonal pure states. A set of orthogonal quantum states is said to be strongly nonlocal if it is locally irreducible in every bipartition. They also gave the examples of strongly nonlocal orthogonal product sets in $\mathbb{C}^{3}\otimes \mathbb{C}^{3}\otimes \mathbb{C}^{3}$ and $\mathbb{C}^{4}\otimes \mathbb{C}^{4}\otimes \mathbb{C}^{4}$. Then, the results on strongly nonlocal sets have been obtained successively \cite{Zhang2019, Shi2020, Yuan2020, Wang2021, Shi2021, He2022, Shi2022, Hu2024, Shi2023, Li2023, Shi6619}. Walgate and Hardy mentioned that in any locally distinguishable protocol, one of the parties must go first and whoever goes first must be able to perform some nontrivial OPLMs. Accordingly, trivial OPLMs are sufficient to prove the nonlocality, which has been frequently utilized as a technique for detecting strong nonlocality of a set.

The notion of local stability was proposed in 2023, which is a stronger form than local irreducibility. An orthogonal set of pure states in multipartite quantum systems is said to be locally stable if the only possible OPLMs on the subsystems are trivial. A set of orthogonal states is strongest nonlocal if the only OPLMs on the subsystems in every bipartition are trivial \cite{Shi6619}. A conjecture on the bounds of locally stable sets and strongest nonlocal sets was proposed, i.e., in multipartite systems $\otimes _{i=1}^N \mathbb{C}^{d_i}$, the size of locally stable set is $\max_i\{d_i+1\}$ and the size of strongest nonlocal set is $\max_i\{ \hat{d_i}+1\}$, where $\hat{d_i}=(\Pi_{j=1}^{N}d_j)/d_i$.

Cao {\it et al}. provided the construction of locally stable sets that reach the lower bound of the cardinality \cite{Cao2023}. So far, the strongest nonlocal sets with the minimum cardinality is obtained only in tripartite systems \cite{Zhen2024}, but the direct construction cannot be generalized to $N$-partite systems $(N\geq4)$. It is generally deemed that a construction reaching the lower bound must contain a stopper state. In fact, that is not the case. In this paper, we find the strongest nonlocal sets with the minimum cardinality in $\mathbb{C}^{d_1}\otimes \mathbb{C}^{d_2}\otimes \mathbb{C}^{d_3} $ without stopper states. In particular, we obtain the strongest nonlocal sets reaching the lower bound in $\mathbb{C}^{d_1}\otimes \mathbb{C}^{d_2}\otimes \mathbb{C}^{d_3}\otimes \mathbb{C}^{d_4}$ for the first time.

The paper is organized as follows. In Sec. \ref{sec2}, we introduce the definitions and notations used in this paper. In Sec. \ref{sec3}, the strongest nonlocal sets without stopper states are presented in $\mathbb{C}^{d_1}\otimes \mathbb{C}^{d_2}\otimes \mathbb{C}^{d_3}$. In Sec. \ref{sec4}, we discuss the construction of the strongest nonlocal sets in detail in $\mathbb{C}^{d}\otimes \mathbb{C}^{d}\otimes \mathbb{C}^{d}\otimes \mathbb{C}^{d}$ and $\mathbb{C}^{d_1}\otimes \mathbb{C}^{d_2}\otimes \mathbb{C}^{d_3}\otimes \mathbb{C}^{d_4}$. Finally, we discuss and summarize in Sec. \ref{sec5}.

\section{Preliminaries}\label{sec2}

Throughout this paper, we assume that $\omega_n = e^{\frac{2\pi \sqrt{-1}}{n}}$ and $\{|0\rangle, |1\rangle, \cdots, |d-1\rangle \}$ is a computational basis of $\mathbb{C}^{d}$. A positive operator-valued measure (POVM) on Hilbert space $\mathcal{H}$ is a set of positive semidefinite operators $\{ E_m =M_m^{\dagger}M_m \}$ such that $\sum_m E_m=I_{\mathcal{H}}$, where $I_{\mathcal{H}}$ is the indentity operator on $\mathcal{H}$. A measurement is \textit{trivial} if each POVM element $E_m$ is proportional to the identity operator, i.e., $E_m\propto I_{\mathcal{H}}$. Otherwise, the measurement is called nontrivial. A measurement is \textit{orthogonal-preserving local measurement} (OPLM) if the post-measurement states keep the mutually orthogonality.

\begin{definition}(Strongest Nonlocality)
 A set of orthogonal multipartite quantum states is \textit{strongest nonlocal} if the only OPLMs on the subsystems in every bipartition are trivial.
\end{definition}
If the only OPLMs on party $B_i=\{A_1A_2 \cdots A_N \} \setminus {A_i}$ are trivial for any $1\le i \le N$, then the set in $\mathcal{H}_{A_1}\otimes \mathcal{H}_{A_2}\otimes\cdots \otimes\mathcal{H}_{A_N}$ is of the strongest nonlocality \cite{Shi6619}. In other words, we only need to show that the only OPLMs performed by any $N-1$ parties are trivial. In this paper, we mainly discuss the strongest nonlocal sets in $N$-parties $(N=3,4)$.  Note that the sets we construct have perfectly symmetric structure, so it is sufficient to prove the strongest nonlocality that the OPLMs on subsystem $BC$ and $BCD$ are trivial. We first recall two lemmas presented in Ref. \cite{Shi6619}.

\begin{lemma}(Block Zeros Lemma)\label{lem:zero}
Let $E=(a_{i,j})_{i,j\in \mathbb{Z}_{d}}$ be the $d \times d$ matrix representation of an operator $E$ under the basis $\mathcal{B}=\{ |0\rangle, |1\rangle, \cdots, |d-1\rangle\}$. Given two nonempty disjoint subsets $\mathcal{S}$ and $\mathcal{T}$ of $\mathcal{B}$, let $\{|\psi_i\rangle \}_{i\in \mathbb{Z}_s}$ and $\{|\phi_j\rangle \}_{j\in \mathbb{Z}_t}$ be two orthogonal sets spanned by $\mathcal{S}$ and $\mathcal{T}$, respectively, where $s=|\mathcal{S}|$ and $t=|\mathcal{T}|$. If $\langle \psi_i| E |\phi_j\rangle=0$ for any $i \in \mathbb{Z}_{s}$, $j \in \mathbb{Z}_{t}$, then $\langle x| E |y \rangle= \langle y| E |x \rangle=0$ for $|x\rangle \in \mathcal{S}$ and $|y\rangle \in \mathcal{T}$.
\end{lemma}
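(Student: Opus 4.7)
The plan is to show that the ``zero-block'' condition on $E$ expressed in the orthogonal families $\{|\psi_i\rangle\}$ and $\{|\phi_j\rangle\}$ transfers back to the computational-basis subsets $\mathcal{S}$ and $\mathcal{T}$ by a change-of-basis argument, together with the Hermiticity of $E$ (which is implicit in our setting, since $E$ arises as a POVM element of the form $M^{\dagger}M$ and is therefore self-adjoint).

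First, I would establish that $\{|\psi_i\rangle\}_{i\in\mathbb{Z}_s}$ and $\{|\phi_j\rangle\}_{j\in\mathbb{Z}_t}$ are in fact orthogonal bases of $\mathrm{span}(\mathcal{S})$ and $\mathrm{span}(\mathcal{T})$, respectively. Since $\mathcal{S},\mathcal{T}\subseteq\mathcal{B}$ consist of vectors drawn from an orthonormal basis, we have $\dim\,\mathrm{span}(\mathcal{S})=s$ and $\dim\,\mathrm{span}(\mathcal{T})=t$; an orthogonal family of $s$ (resp.~$t$) nonzero vectors sitting inside an $s$-dimensional (resp.~$t$-dimensional) space must then span it.

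Next, for any $|x\rangle\in\mathcal{S}$ and $|y\rangle\in\mathcal{T}$, I would expand $|x\rangle=\sum_{i} c_i|\psi_i\rangle$ and $|y\rangle=\sum_{j} d_j|\phi_j\rangle$ in those bases; sesquilinearity then yields
\[
\langle x|E|y\rangle=\sum_{i,j}c_i^{*}d_j\,\langle\psi_i|E|\phi_j\rangle=0
\]
directly from the hypothesis. The companion identity $\langle y|E|x\rangle=0$ follows from $E=E^{\dagger}$ by complex conjugation, which one also sees intrinsically as the statement $P_{\mathcal{T}}EP_{\mathcal{S}}=(P_{\mathcal{S}}EP_{\mathcal{T}})^{\dagger}=0$.

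There is no genuine obstacle here; the lemma is essentially a basis-translation statement. The only point to check carefully is the dimension count in the first step, which ensures that $\{|\psi_i\rangle\}$ and $\{|\phi_j\rangle\}$ are full bases of their respective spans rather than merely orthogonal subsets of them—without this, the hypothesis would constrain $E$ only on a proper subspace and the conclusion could fail.
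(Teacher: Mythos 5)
Your proof is correct and is essentially the standard argument for this lemma (which the present paper only recalls from Ref.~\cite{Shi6619} without reproving): the dimension count makes $\{|\psi_i\rangle\}_{i\in\mathbb{Z}_s}$ and $\{|\phi_j\rangle\}_{j\in\mathbb{Z}_t}$ orthogonal bases of $\mathrm{span}(\mathcal{S})$ and $\mathrm{span}(\mathcal{T})$, after which sesquilinearity and the Hermiticity of $E=M^{\dagger}M$ give both $\langle x|E|y\rangle=0$ and $\langle y|E|x\rangle=0$. You correctly identified the two points that actually need care, namely the dimension count and the use of $E=E^{\dagger}$ for the transposed block.
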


\begin{lemma}(Block Trivial Lemma)\label{lem:trivial}
Let $E=(a_{i,j})_{i,j\in \mathbb{Z}_{d}}$ be the $d \times d$ matrix representation of an operator $E$ under the basis $\mathcal{B}=\{ |0\rangle, |1\rangle, \cdots, |d-1\rangle\}$. Given a nonempty subset $\mathcal{S}$ of $\mathcal{B}$, let $\{|\psi_i\rangle \}_{i\in \mathbb{Z}_s}$ be an orthogonal set spanned by $\mathcal{S}$. Assume that $\langle \psi_i| E |\psi_j \rangle=0$ for any $i \ne j \in \mathbb{Z}_s$. If there exists a state $|x\rangle \in \mathcal{S}$ such that $\langle x| E |y\rangle=0$ for all $|y\rangle \in \mathcal{S} \setminus \{|x\rangle \} $ and $\langle x|\psi_j\rangle \ne 0$ for any $j\in \mathbb{Z}_s$, then $\langle y| E |z\rangle=0$ and $\langle y| E |y \rangle=\langle z| E |z\rangle$ for all $|y\rangle, |z\rangle \in \mathcal{S}$ with $|y\rangle \ne |z\rangle$.
\end{lemma}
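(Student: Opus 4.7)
The plan is to work inside the subspace $V=\mathrm{span}(\mathcal{S})$, for which the cardinality condition $|\{|\psi_i\rangle\}|=s=\dim V$ ensures that $\{|\psi_i\rangle\}_{i\in\mathbb{Z}_s}$ is itself a basis. Since the hypotheses are invariant under rescaling each $|\psi_i\rangle$, I would normalize so that this set is orthonormal; then the assumption $\langle\psi_i|E|\psi_j\rangle=0$ for $i\neq j$ gives the diagonal form
\[
P_VEP_V=\sum_{i\in\mathbb{Z}_s}\lambda_i\,|\psi_i\rangle\langle\psi_i|,\qquad \lambda_i:=\langle\psi_i|E|\psi_i\rangle.
\]
Because $\mathcal{S}$ is also orthonormal, the desired conclusion $\langle y|E|z\rangle=0$ together with $\langle y|E|y\rangle=\langle z|E|z\rangle$ for all $|y\rangle,|z\rangle\in\mathcal{S}$ is equivalent to $P_VEP_V=\lambda\,P_V$ for a single scalar $\lambda$, so the task reduces to proving that all $\lambda_i$ coincide.

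Next I would translate the single-row hypothesis on $|x\rangle$ into a linear condition on the $\lambda_i$. Expanding $|x\rangle=\sum_i\beta_i|\psi_i\rangle$ with $\beta_i=\langle\psi_i|x\rangle$—all nonzero by hypothesis—and any $|y\rangle\in\mathcal{S}$ as $|y\rangle=\sum_i\langle\psi_i|y\rangle|\psi_i\rangle$, the diagonal form yields
\[
\langle x|E|y\rangle=\sum_{i\in\mathbb{Z}_s}\beta_i^{*}\lambda_i\,\langle\psi_i|y\rangle,
\]
which vanishes for every $|y\rangle\in\mathcal{S}\setminus\{|x\rangle\}$. The key trick is to recast this family of scalar equations as a single orthogonality statement inside $V$: defining $|v\rangle:=\sum_i\overline{\beta_i^{*}\lambda_i}\,|\psi_i\rangle\in V$, one computes $\langle v|y\rangle=\sum_i\beta_i^{*}\lambda_i\langle\psi_i|y\rangle=0$ for every such $|y\rangle$. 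Since the orthogonal complement of $\mathrm{span}(\mathcal{S}\setminus\{|x\rangle\})$ inside $V$ is $\mathbb{C}|x\rangle$, it follows that $|v\rangle=c|x\rangle$ for some scalar $c$.

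Matching $|\psi_i\rangle$-coefficients in $|v\rangle=c|x\rangle=c\sum_i\beta_i|\psi_i\rangle$ gives $\beta_i\lambda_i^{*}=c\beta_i$ for every $i$; dividing by $\beta_i\neq 0$ forces $\lambda_i=c^{*}$ independent of $i$, which completes the reduction and therefore the proof. The main obstacle I would anticipate is precisely the reformulation step: one must guess the right auxiliary vector $|v\rangle$ so that the row-vanishing condition on $|x\rangle$ becomes an orthogonality relation within $V$, and ensure that all three hypotheses—off-diagonal vanishing in the $\psi$-basis, off-diagonal vanishing of the $x$-row in $\mathcal{S}$, and non-vanishing of every overlap $\beta_i=\langle\psi_i|x\rangle$—act in concert. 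Without the non-vanishing assumption, the final linear system fails to pin down the eigenvalues, so $\beta_i\neq 0$ carries essential content and is exactly what makes the $|x\rangle$-direction an effective ``probe'' of the spectrum of $P_VEP_V$.
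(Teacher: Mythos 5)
Your argument is correct and self-contained: reducing to the orthonormal basis $\{|\psi_i\rangle\}$ of $V=\mathrm{span}(\mathcal{S})$, writing $P_VEP_V=\sum_i\lambda_i|\psi_i\rangle\langle\psi_i|$, and using the probe vector $|x\rangle$ (with all overlaps $\beta_i\neq0$) via the auxiliary vector $|v\rangle$ to force $\lambda_i=c^{*}$ for all $i$ is exactly the content of the lemma, and each step (in particular that $|v\rangle\in V$ lies in $\bigl(\mathrm{span}(\mathcal{S}\setminus\{|x\rangle\})\bigr)^{\perp}\cap V=\mathbb{C}|x\rangle$ because $\mathcal{S}$ is orthonormal) checks out. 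Note that the present paper does not prove this lemma but only recalls it from Ref.~\cite{Shi6619}; your derivation matches the standard argument given there, which equates the two expansions of $\langle x|E|\psi_j\rangle$ rather than introducing $|v\rangle$ --- a cosmetic difference only.
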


\section{Strongest Nonlocal Sets in Tripartite Systems}\label{sec3}
In this section, we present the strongest nonlocal sets with the minimum cardinality in $\mathbb{C}^{d_1}\otimes \mathbb{C}^{d_2}\otimes \mathbb{C}^{d_3}$. First, we consider the construction in $\mathbb{C}^{d}\otimes \mathbb{C}^{d}\otimes \mathbb{C}^{d}$. Here each state is shared by A, B, and C. Define
    \begin{equation*}\small
      \begin{aligned}
      &\mathcal{B}_{1}=\{ |\alpha_1\rangle = |100\rangle, |\alpha_2\rangle = |001\rangle, |\alpha_3\rangle = |010\rangle\},\\
      &\mathcal{B}_{2}=\{ \frac{1}{\sqrt{2}} (|00i\rangle +|(i-1)~(i-1)~0\rangle)~|~ 2\le i \le d-1\},\\
      &\mathcal{B}_{3}=\{ \frac{1}{\sqrt{2} } (|0i0\rangle +|(i-1)~0~(i-1)\rangle)~|~ 2\le i \le d-1\},\\
      &\mathcal{B}_{4}=\{ \frac{1}{\sqrt{2} } (|0~(i-1)~(i-1)\rangle +|i00\rangle)~|~ 2\le i \le d-1\},\\
      &\mathcal{B}_{5}=\{ \frac{1}{\sqrt{3} } (|0ij\rangle +|ij0\rangle +|j0i\rangle)~|~1\le i,j \le d-1\\
      &~~~~~~~~~ \backslash(i,j)=(1,1), \cdots, (d-2,d-2) \}.
      \end{aligned}
    \end{equation*}
\begin{theorem}
In $\mathbb{C}^{d}\otimes \mathbb{C}^{d}\otimes \mathbb{C}^{d}$ $(d\geq 2)$, the set $\{|\psi_{i} \rangle\}_{i=0}^{d^2}$ below is a strongest nonlocal set of size $d^2+1$,
\begin{equation}\label{eq:ddd1}
  \begin{aligned}
  &|\psi_{0} \rangle= |\alpha_{0} \rangle, \\
  &|\psi_{i} \rangle= \sum_{j=1}^{d^2} \omega_{d^2}^{ij}|\alpha_{j} \rangle, ~~~ 1\le i \le d^2,\\
  \end{aligned}
\end{equation}
where
\begin{equation}\label{eq:ddd2}
  \begin{aligned}
   &|\alpha_{0} \rangle =|000\rangle,\\
   &\{ |\alpha_{j} \rangle\}_{j=1}^{d^2}=\mathcal{B}_{1}\cup \mathcal{B}_{2}\cup \mathcal{B}_{3}\cup \mathcal{B}_{4}\cup \mathcal{B}_{5}.
  \end{aligned}
\end{equation}
\end{theorem}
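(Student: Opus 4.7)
The plan is to verify that the only OPLMs on the bipartite subsystem $BC$ are trivial; the analogous conclusions for $AC$ and $AB$ follow from the cyclic symmetry of $\mathcal{B}_1\cup\mathcal{B}_2\cup\mathcal{B}_3\cup\mathcal{B}_4\cup\mathcal{B}_5$ under the party permutation $A\to B\to C\to A$ ($\mathcal{B}_1$ and every element of $\mathcal{B}_5$ are invariant, while $\mathcal{B}_2,\mathcal{B}_3,\mathcal{B}_4$ cycle into each other). Fix a POVM element $E$ of such an OPLM as a $d^2\times d^2$ matrix in the basis $\{|bc\rangle:0\leq b,c\leq d-1\}$ of $\mathbb{C}^d_B\otimes\mathbb{C}^d_C$, and set $F_{kl}:=\langle\alpha_k|(I_A\otimes E)|\alpha_l\rangle$. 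The target is $E\propto I_{BC}$.

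Because $\{|\psi_i\rangle\}_{i=1}^{d^2}$ is, up to normalisation, the discrete Fourier transform of $\{|\alpha_j\rangle\}_{j=1}^{d^2}$, Fourier-inverting the orthogonality-preservation conditions $\langle\psi_i|(I_A\otimes E)|\psi_j\rangle=0$ for $i\neq j$ produces two structural statements: $F_{0l}=F_{l0}=0$ for every $l\geq 1$, and $(F_{kl})_{k,l\geq 1}$ is circulant in the cyclic index $k-l\bmod d^2$, so that the value $f(s):=F_{k,k+s}$ depends only on $s$. The crucial consequence of the circulant identity is that whenever a single pair $(k,l)$ with $l-k\equiv s\pmod{d^2}$ has disjoint Alice-slice supports (making $F_{kl}=0$ automatic), the common scalar $f(s)$ vanishes, and thereby every $F_{k',k'+s}$ vanishes, producing nontrivial matrix-element identities for $E$.

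I would then extract zero entries of $E$ in layers. The equations $F_{0l}=0$ expanded over $|\alpha_l\rangle\in\mathcal{B}_1\cup\mathcal{B}_2\cup\mathcal{B}_3\cup\mathcal{B}_4$ pin down $\langle 00|E|bc\rangle=0$ for every $(b,c)\neq(0,0)$. Pairs drawn from $\mathcal{B}_1$ seed trivially vanishing $F$'s, and propagating through the two-term states in $\mathcal{B}_2,\mathcal{B}_3,\mathcal{B}_4$ yields $\langle 0i|E|0j\rangle=\langle i0|E|j0\rangle=\langle(i-1)(i-1)|E|(j-1)(j-1)\rangle=0$. Pairs involving $\mathcal{B}_5$, whose triples simultaneously occupy three distinct Alice-slices $|0\rangle,|i\rangle,|j\rangle$, act as the bridge that reaches every remaining off-diagonal entry $\langle bc|E|b'c'\rangle$; Lemma~\ref{lem:zero} (Block Zeros) is the routine tool used to peel off single matrix elements from each two- or three-term orthogonality relation. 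Finally, $f(0)=F_{kk}$ being a single constant across $k\geq 1$ equates convex combinations of the diagonal entries of $E$, and one round of Lemma~\ref{lem:trivial} (Block Trivial), seeded by $|000\rangle$ together with the overlap pattern of $\mathcal{B}_2\cup\mathcal{B}_3\cup\mathcal{B}_4\cup\mathcal{B}_5$, forces every diagonal entry to coincide, yielding $E\propto I_{BC}$.

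The main obstacle, characteristic of every stopper-free construction, is that no single orthogonality relation annihilates a full row of $E$; every off-diagonal entry has to be killed one at a time by chaining Block-Zeros applications, and the proof must verify that this chain covers every $\langle bc|E|b'c'\rangle$ with $(b,c)\neq(b',c')$. Bookkeeping this coverage in the presence of the $\mathcal{B}_5$ exclusion of diagonal $(i,i)$ pairs, and separately handling the small-$d$ edge cases in which one or more of $\mathcal{B}_2,\ldots,\mathcal{B}_5$ is empty (notably the base case $d=2$), is where the bulk of the technical work lies.
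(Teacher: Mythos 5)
Your proposal is correct and follows essentially the same route as the paper: reduce to one bipartition by the cyclic symmetry, kill the row $\langle 00|E|mn\rangle$ via $\langle\psi_0|(I\otimes E)|\psi_i\rangle=0$, upgrade to $\langle\alpha_x|(I\otimes E)|\alpha_y\rangle=\delta_{xy}\,c$ (your circulant/Fourier-inversion observation is just a self-contained re-derivation of what Lemma~\ref{lem:trivial} delivers, seeded by $|\alpha_1\rangle=|100\rangle$ rather than $|000\rangle$), and then chain through $\mathcal{B}_1,\dots,\mathcal{B}_5$ inductively to annihilate the off-diagonal entries and equate the diagonal ones. The only caveat is that the coverage bookkeeping you defer is exactly the paper's Steps $1$ through $d$, and at step $k$ the two-term states already give $\langle 0k|E|mn\rangle=\langle k0|E|mn\rangle=0$ for \emph{all} $(m,n)$ (not just the restricted entries you list), with $\mathcal{B}_4,\mathcal{B}_5$ then finishing the rows $\langle ij|$ with $i,j\ge1$.
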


\begin{proof}
 Suppose that B and C perform a joint OPLM $\{E=M^{\dagger}M\}$. The post-measurement states should be mutually orthogonal, i.e., $\langle \psi_{i}| I\otimes E |\psi_{j} \rangle = 0$ for $0\le i\ne j \le d^2$.

\begin{figure*}[ht!]
\centerline{\includegraphics[width=\textwidth, keepaspectratio]{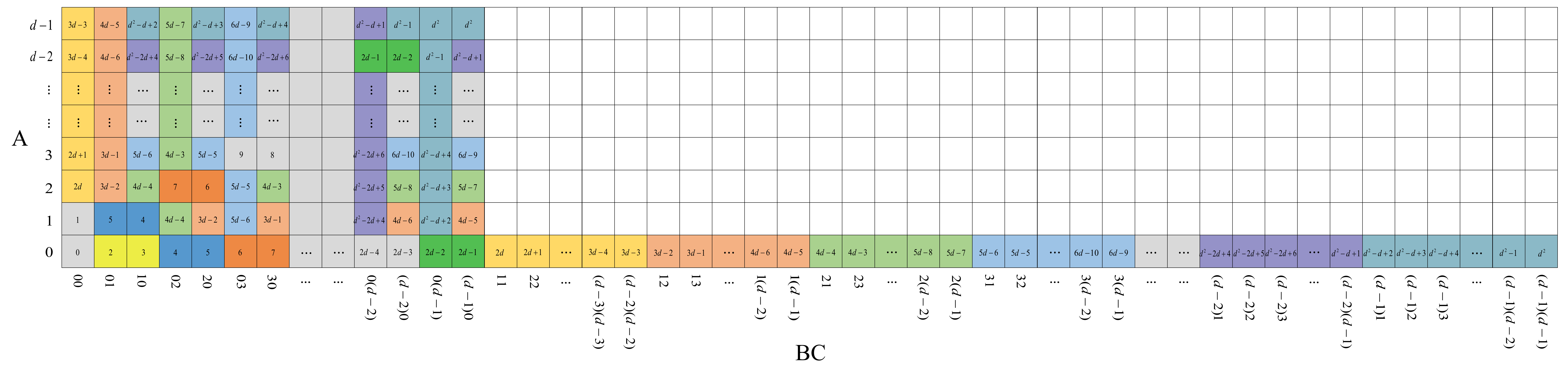}}
\caption{The corresponding $d\times d^2$ grid of $\{|\alpha_{i} \rangle\}_{i=0}^{d^2}$ given by Eq. (\ref{eq:ddd2}) in $A|BC$ bipartition. Every grid has an index $(i, mn)$, where $i$ is the row index in $A$ part and $mn$ is the column index in $BC$ part. For example, $|\alpha_4 \rangle$ corresponds to the cell set $\{(0,02), (1, 10)\}$. }
\label{fig1}
\end{figure*}
 Note that $\{|\psi_{0} \rangle\}$ is spanned by $\{|\alpha_{0}\rangle\}$ and $\{|\psi_{i} \rangle\}_{i=1}^{d^2}$ is spanned by $\{|\alpha_{j} \rangle\}_{j=1}^{d^2}$. For clarity, the structure of $\{|\alpha_{i} \rangle\}_{i=0}^{d^2}$ in $A|BC$ bipartition is shown in Fig. \ref{fig1}. As $\langle \psi_{0}| I\otimes E |\psi_{i} \rangle = 0$ for $1 \le i \le d^2$, applying Lemma \ref{lem:zero} to $\{|\psi_{0}\rangle\}$ and $\{|\psi_{i} \rangle\}_{i=1}^{d^2}$, we have $\langle \alpha_{0}| I\otimes E |\alpha_{j} \rangle =\langle \alpha_{j}| I\otimes E |\alpha_{0} \rangle = 0$ for $1\le j \le d^2$, which implies that
 \begin{equation*}
        \begin{aligned}
            \langle 00|E|mn\rangle =\langle mn|E|00\rangle = 0,
        \end{aligned}
    \end{equation*}
 where $(m,n)\in \mathbb{Z}_d \times \mathbb{Z}_d\backslash (0,0)$. Hence, we have
    \begin{equation*}
        \begin{aligned}
            &\langle \alpha_{1}| I\otimes E |\alpha_{j} \rangle = 0, ~~2\le j\le d^2,\\
            &\langle \alpha_1 |\psi_j \rangle \ne 0, ~~1\le j \le d^2.
        \end{aligned}
    \end{equation*}
 Applying Lemma \ref{lem:trivial} to $\{|\psi_{i} \rangle\}_{i=1}^{d^2}$, we obtain
    \begin{equation*}
     \begin{aligned}
            &\langle \alpha_{x}| I\otimes E |\alpha_{y} \rangle = 0, ~~1\le x\ne y\le d^2,\\
            &\langle \alpha_{x}| I\otimes E |\alpha_{x} \rangle = \langle \alpha_{y}| I\otimes E |\alpha_{y} \rangle, ~~1\le x\ne y \le d^2.
      \end{aligned}
    \end{equation*}

    First, we prove that the off-diagonal entries of the matrix $E$ are zeros, namely, $\langle ij |E|mn \rangle=0$, where $(m,n)\neq (i,j)$ in the proof below.\\
    \textbf{Step 1}
    From $\langle001| I\otimes E |\alpha_y\rangle =\langle010| I\otimes E |\alpha_y\rangle = 0$, we have
    \begin{equation*}
        \begin{aligned}
            \langle01| E |mn\rangle =\langle10| E |mn\rangle = 0.
        \end{aligned}
    \end{equation*}
    \textbf{Step 2}
    Since $\langle10| E |mn\rangle = 0$, $\frac{1}{\sqrt{2}}(\langle 002|+\langle 110|) I\otimes E |\alpha_{y} \rangle =\frac{1}{\sqrt{2}} (\langle 002| I\otimes E |\alpha_{y} \rangle+ \langle 110| I\otimes E |\alpha_{y} \rangle)=0$ implies that $\langle 002| I\otimes E |\alpha_{j} \rangle=0$, i.e.,
    \begin{equation*}
        \begin{aligned}
            \langle 02| E |mn \rangle=0.
        \end{aligned}
    \end{equation*}
    In the same way, $\frac{1}{\sqrt{2}}(\langle 020|+\langle 101|) I\otimes E |\alpha_{j} \rangle =0$ implies that
    \begin{equation*}
        \begin{aligned}
            \langle20| E |mn\rangle= 0.
        \end{aligned}
    \end{equation*}
    \textbf{Step \bm{$k~(3\le k \le d-1)$}}
    From the ($k-1$)-th step, we get $\langle 0~(k-1)| E |mn\rangle =\langle (k-1)~0| E |mn\rangle =0$. Thus $\frac{1}{\sqrt{2}}(\langle 00k|+\langle (k-1)~(k-1)~0|) E |\alpha_y\rangle =\frac{1}{\sqrt{2}}(\langle 0k0|+\langle (k-1)~0~(k-1)|) E |\alpha_y\rangle=0$ means that
    \begin{equation*}
        \begin{aligned}
            \langle 0k| E |mn\rangle =\langle k0| E |mn\rangle=0, ~3\le k \le d-1.
        \end{aligned}
    \end{equation*}
    \textbf{Step \bm{$d$}}
    According to the above results, for the states in $\mathcal{B}_{4}$ and $\mathcal{B}_{5}$ we have
    \begin{equation*}
        \begin{aligned}
           \langle ij| E |mn\rangle = 0, ~1\le i,j \le d-1.
        \end{aligned}
    \end{equation*}

    Next, we prove that all the diagonal entries of the matrix $E$ are equal.\\
    \textbf{Step 1}
    From $\langle100| I\otimes E |100\rangle =\langle001| I\otimes E |001\rangle =\langle010| I\otimes E |010\rangle $, we have
    \begin{equation*}
        \begin{aligned}
          \langle00| E |00\rangle =\langle01| E |01\rangle=\langle10| E |10\rangle.
        \end{aligned}
    \end{equation*}
    \textbf{Step 2}
    As $\langle00| E |00\rangle =\langle10| E |10\rangle$, $\langle100| I\otimes E |100\rangle =\frac{1}{2}(\langle 002|+\langle 110|) I\otimes E (|002\rangle+ |110\rangle) = \frac{1}{2}(\langle 002| I\otimes E |002\rangle + \langle 110| I\otimes E |110\rangle)$ implies that
    \begin{equation*}
        \begin{aligned}
          \langle00| E |00\rangle =\langle02| E |02\rangle.
        \end{aligned}
    \end{equation*}
    Similarly, $\langle100| I\otimes E |100\rangle =\frac{1}{2}(\langle 020|+\langle 101|) I\otimes E (|020\rangle+ |101\rangle)$ means that
    \begin{equation*}
        \begin{aligned}
          \langle00| E |00\rangle =\langle20| E |20\rangle.
        \end{aligned}
    \end{equation*}
    \textbf{Step \bm{$k~(3\le k \le d-1)$}}
    From the ($k-1$)-th step, we get $\langle00| E |00\rangle =\langle0~(k-1)| E |0~(k-1)\rangle=\langle (k-1)~0| E |(k-1)~0\rangle$. Thus $\langle100| I\otimes E |100\rangle =\frac{1}{2}(\langle 00k|+\langle (k-1)~(k-1)~0|) I\otimes E (| 00k\rangle+|(k-1)~(k-1)~0\rangle)=\frac{1}{2}(\langle 0k0|+\langle (k-1)~0~(k-1)|) I\otimes E ( |0k0\rangle+|(k-1)~0~(k-1)\rangle )$ implies that
     \begin{equation*}
        \begin{aligned}
           \langle00| E |00\rangle =\langle 0k| E |0k\rangle=\langle k0| E |k0\rangle, ~3\le k \le d-1.
        \end{aligned}
    \end{equation*}
    \textbf{Step \bm{$d$}}
    Based on the above results, considering the states in $\mathcal{B}_{4,5}$ and $|100\rangle$, we have
    \begin{equation*}
        \begin{aligned}
           \langle00| E |00\rangle =\langle ij| E |ij\rangle, ~1\le i,j \le d-1.
        \end{aligned}
    \end{equation*}

    Thus $E\propto I$. This completes the proof.
\end{proof}

Now we generalize the construction to general tripartite systems $\mathbb{C}^{d_{1}}\otimes \mathbb{C}^{d_{2}}\otimes \mathbb{C}^{d_{3}}$ $(2\leq d_{1}\leq d_{2}\leq d_{3})$. Define
\begin{equation*}\small
  \begin{aligned}
    &\mathcal{B}_{1}=\{ |\alpha_1\rangle = |100\rangle, |\alpha_2\rangle = |001\rangle, |\alpha_3\rangle = |010\rangle\},\\
    &\mathcal{B}_{2}=\{ \frac{1}{\sqrt{2} } (|00i\rangle +|(i-1)~(i-1)~0\rangle)~|~2\le i \le d_1 \},\\
    &\mathcal{B}_{3}=\{ \frac{1}{\sqrt{2} } (|0i0\rangle +|(i-1)~0~(i-1)\rangle)~|~2\le i \le d_1 \},\\
    &\mathcal{B}_{4}=\{ |00i\rangle ~|~d_1+1\le i \le d_3-1 \},\\
    &\mathcal{B}_{5}=\{ |0i0\rangle ~|~d_1+1\le i \le d_2-1 \},\\
    &\mathcal{B}_{6}=\{ \frac{1}{\sqrt{2} } (|0~(i-1)~(i-1)\rangle +|i00\rangle)~|~2\le i \le d_1-1 \},\\
    &\mathcal{B}_{7}= \{ |0~(i-1)~(i-1)\rangle ~|~d_1\le i \le d_2 \},\\
    &\mathcal{B}_{8}= \{ \frac{1}{\sqrt{3} } (|0ij\rangle +|ij0\rangle +|j0i\rangle)~|~1\le i,j \le d_1-1\\
    &~~~~~~~~ \backslash(i,j) = (1,1), \cdots,(d_1-1,d_1-1) \},\\
    &\mathcal{B}_{9}=\{ \frac{1}{\sqrt{2} } (|0ij\rangle +|ij0\rangle)~|~1\le i \le d_1-1,\\
    &~~~~~~~~~d_1 \le j \le d_2-1 \},\\
    &\mathcal{B}_{10}=\{ \frac{1}{\sqrt{2} } (|0ji\rangle +|i0j\rangle)~|~1\le i \le d_1-1,\\
    &~~~~~~~~~d_1 \le j \le d_2-1 \},\\
    &\mathcal{B}_{11}=\{ \frac{1}{\sqrt{2} } (|0ij\rangle +|i0j\rangle)~|~1\le i \le d_1-1, \\
    &~~~~~~~~~d_2 \le j \le d_3-1 \},\\
    &\mathcal{B}_{12}=\{ |0ij\rangle ~|~d_1\le i \le d_2-1, ~d_2 \le j \le d_3-1\},\\
    &\mathcal{B}_{13}=\{ |0ij\rangle ~|~d_1\le i,j \le d_2-1 \backslash(i,j) =(d_1,d_1), \cdots,\\
    &~~~~~~~~~(d_2-1,d_2-1)\}.\\
    \end{aligned}
\end{equation*}

\begin{theorem}\label{th:d1d2d3}
    In $\mathbb{C}^{d_{1}}\otimes \mathbb{C}^{d_{2}}\otimes \mathbb{C}^{d_{3}}$ $(2 \le d_1 \le d_2 \le d_3)$, the set $\{|\psi_{i} \rangle\}_{i=0}^{d_2d_3}$ below is a strongest nonlocal set of size $d_2d_3+1$,
    \begin{equation}\label{eq:d1d2d3}
      \begin{aligned}
         |\psi_{0} \rangle &= |\alpha_{0} \rangle, \\
        |\psi_{i} \rangle &= \sum_{j=1}^{d_2d_3} \omega_{d_2d_3}^{ij}|\alpha_{j} \rangle, ~~~ 1\le i \le d_2d_3,\\
      \end{aligned}
    \end{equation}
where
\begin{equation}
  \begin{aligned}
   &|\alpha_{0} \rangle=|000\rangle,\\
   &\{ |\alpha_{j} \rangle\}_{j=1}^{d_2d_3}=\mathcal{B}_{1}\cup \mathcal{B}_{2}\cup \mathcal{B}_{3}\cup \cdots \cup \mathcal{B}_{13}.
   \end{aligned}
\end{equation}
\end{theorem}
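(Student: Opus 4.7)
The plan is to mimic the two-phase argument used for Theorem~1. Assume B and C perform a joint OPLM $\{E=M^{\dagger}M\}$, and regard $E$ as a $d_{2}d_{3}\times d_{2}d_{3}$ matrix on $\mathbb{C}^{d_{2}}\otimes\mathbb{C}^{d_{3}}$ in the computational basis; the goal is to conclude $E\propto I$. Applying Lemma~\ref{lem:zero} to the partition $\{|\psi_{0}\rangle\}$ versus $\{|\psi_{i}\rangle\}_{i=1}^{d_{2}d_{3}}$ immediately gives $\langle 00|E|mn\rangle=\langle mn|E|00\rangle=0$ for every $(m,n)\neq(0,0)$, because every $|\alpha_{j}\rangle$ with $j\ge 1$ is BC-orthogonal to $|\alpha_{0}\rangle=|000\rangle$. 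Since $|\alpha_{1}\rangle=|100\rangle$ occupies the isolated A-slice $|1\rangle$ and satisfies $\langle \alpha_{1}|\psi_{j}\rangle\neq 0$ for all $j\ge 1$, Lemma~\ref{lem:trivial} applied to $\{|\psi_{i}\rangle\}_{i=1}^{d_{2}d_{3}}$ then yields $\langle \alpha_{x}|I\otimes E|\alpha_{y}\rangle=0$ for $x\neq y\ge 1$ and $\langle \alpha_{x}|I\otimes E|\alpha_{x}\rangle=\langle 00|E|00\rangle$ for all $x\ge 1$. These two conditions now feed a step-by-step propagation through the thirteen basis families.

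For the off-diagonal conclusion ($\langle mn|E|m'n'\rangle=0$ whenever $(m,n)\neq(m',n')$) I would induct on the ``level'' of the BC indices. First, $|001\rangle,|010\rangle\in\mathcal{B}_{1}$ directly give $\langle 01|E|mn\rangle=\langle 10|E|mn\rangle=0$. For $2\le k\le d_{1}$, the superposition states in $\mathcal{B}_{2}$ and $\mathcal{B}_{3}$, combined with the zeros already established at level $k-1$, force $\langle 0k|E|mn\rangle=\langle k0|E|mn\rangle=0$. The singletons in $\mathcal{B}_{4}$ and $\mathcal{B}_{5}$ then extend this to the ranges $d_{1}+1\le k\le d_{3}-1$ and $d_{1}+1\le k\le d_{2}-1$ respectively, clearing the top row and leftmost column of the BC-grid. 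Next, $\mathcal{B}_{6}$ and $\mathcal{B}_{7}$ take care of the diagonal cells $|0(i-1)(i-1)\rangle$, while the ``bridge'' families $\mathcal{B}_{8}$--$\mathcal{B}_{11}$ and the singleton families $\mathcal{B}_{12},\mathcal{B}_{13}$ sweep the remaining cells of the $d_{2}\times d_{3}$ grid, using the symmetric superpositions to transport vanishing across A-slices via Lemma~\ref{lem:zero}.

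The diagonal-equality $\langle mn|E|mn\rangle=\langle 00|E|00\rangle$ follows by a parallel induction: once all off-diagonals of $E$ are known to vanish, each superposition state $|\alpha_{x}\rangle$ has $\langle \alpha_{x}|I\otimes E|\alpha_{x}\rangle$ splitting cleanly as the average of the diagonal expectations $\langle mn|E|mn\rangle$ over the BC-supports appearing in $|\alpha_{x}\rangle$; comparing this to $\langle \alpha_{1}|I\otimes E|\alpha_{1}\rangle=\langle 00|E|00\rangle$ identifies each fresh $\langle mn|E|mn\rangle$ with $\langle 00|E|00\rangle$ in exactly the same order as the off-diagonal sweep. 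Combining both conclusions yields $E\propto I_{d_{2}d_{3}}$, and the symmetry of the construction in the three subsystems delivers the analogous statements for the other bipartitions.

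The main obstacle is the bookkeeping forced by the non-square geometry $d_{1}\le d_{2}\le d_{3}$: one has to check that the thirteen families together reach every cell of $\mathbb{Z}_{d_{2}}\times\mathbb{Z}_{d_{3}}$ (so no entry of $E$ is missed), and that the induction always has the preceding zeros available when invoked. In particular, the bridge families $\mathcal{B}_{9}$--$\mathcal{B}_{11}$ must correctly link the narrow range $\{1,\dots,d_{1}-1\}$ to the wider ranges $\{d_{1},\dots,d_{2}-1\}$ and $\{d_{2},\dots,d_{3}-1\}$, while $\mathcal{B}_{12},\mathcal{B}_{13}$ must plug the corners where superpositions are unnecessary because the BC-support is already disjoint from $|00\rangle$ and the relevant $\langle 0\cdot|E|\cdot\rangle$ zeros have already been established. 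Once this case analysis is organized cleanly by the order of the families, the inductive mechanism from Theorem~1 carries over essentially verbatim.
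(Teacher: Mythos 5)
Your proposal follows essentially the same route as the paper's own proof in Appendix~\ref{append:A}: Lemma~\ref{lem:zero} on $\{|\psi_0\rangle\}$ versus $\{|\psi_i\rangle\}_{i=1}^{d_2d_3}$, Lemma~\ref{lem:trivial} anchored at $|\alpha_1\rangle=|100\rangle$, and then the same ordered sweep ($\mathcal{B}_1$, then $\mathcal{B}_{2,3}$ for $2\le k\le d_1$, then $\mathcal{B}_{4,5}$ for the extended ranges, then $\mathcal{B}_{6}$--$\mathcal{B}_{13}$) for both the off-diagonal vanishing and the diagonal equality. The only difference is presentational — you split the final step into diagonal cells, bridges, and singletons where the paper lumps $\mathcal{B}_{6\text{--}13}$ together — so the argument matches.
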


The proof of Theorem \ref{th:d1d2d3} is given in Appendix \ref{append:A}. Compared with the  strongest nonlocal set in $\mathbb{C}^{d_1}\otimes \mathbb{C}^{d_2}\otimes \mathbb{C}^{d_3}$ in Ref.~\cite{Zhen2024}, our strongest nonlocal set does not contain the stopper state $|S \rangle=(\sum_{i=0}^{d_1-1}|i \rangle)\otimes(\sum_{j=0}^{d_2-1}|j \rangle)\otimes(\sum_{k=0}^{d_3-1}|k \rangle)$.

\section{Strongest Nonlocal Sets in Four-partite Systems}\label{sec4}
In this section, we first present an orthogonal set of size 28 with the strongest nonlocality in $\mathbb{C}^{3}\otimes \mathbb{C}^{3}\otimes \mathbb{C}^{3}\otimes \mathbb{C}^{3}$. Then we give the strongest nonlocal sets with the minimum cardinality in $\mathbb{C}^{d_{1}}\otimes \mathbb{C}^{d_{2}}\otimes \mathbb{C}^{d_{3}}\otimes \mathbb{C}^{d_{4}}$. Here each state is shared by A, B, C, and D.

\begin{figure*}[ht!]
\centerline{\includegraphics[width=\textwidth, keepaspectratio]{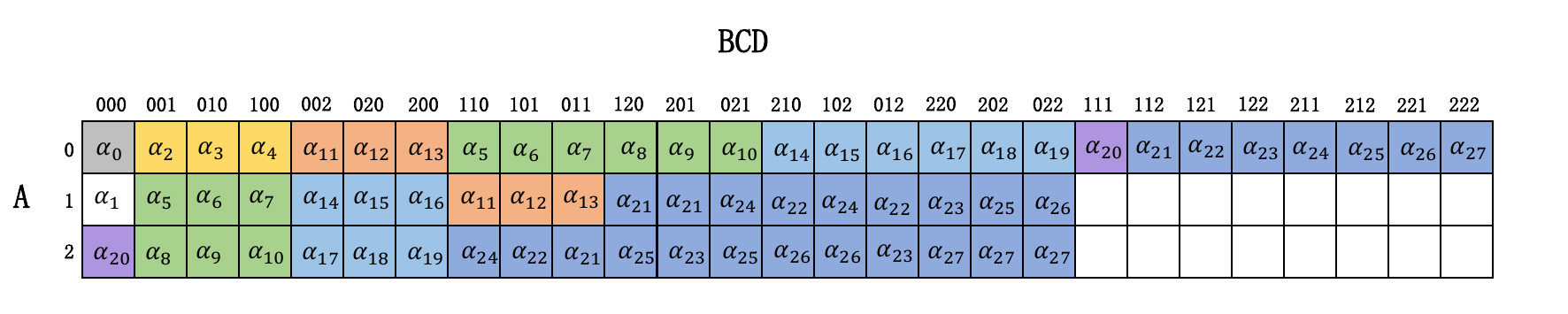}}
\caption{The corresponding $3\times 27$ grid of $\{|\alpha_{i} \rangle\}_{i=0}^{27}$ given by Eq. (\ref{eq:33332}) in $A|BCD$ bipartition. Every grid has an index $(i, mnl)$, where $i$ is the row index in part $A$ and $mnl$ is the column index in joint part $BCD$. For example, $|\alpha_{11} \rangle$ corresponds to the cell set $\{(0,002), (1, 110)\}$.}
\label{fig2}
\end{figure*}

\begin{example}
    In $\mathbb{C}^{3}\otimes \mathbb{C}^{3}\otimes \mathbb{C}^{3}\otimes \mathbb{C}^{3}$, the set $\{|\psi_{i} \rangle\}_{i=0}^{27}$ below is a strongest nonlocal set of size 28,
    \begin{equation}\label{eq:33331}
      \begin{aligned}
       |\psi_{0} \rangle &= |\alpha_{0} \rangle, \\
       |\psi_{i} \rangle &= \sum_{j=1}^{27} \omega_{27}^{ij}~|\alpha_{j} \rangle, ~~~ 1\le i \le 27,
      \end{aligned}
    \end{equation}
where
\begin{equation}\label{eq:33332}
\begin{aligned}
|\alpha_0\rangle &= |0000\rangle, \\
|\alpha_1\rangle &= |1000\rangle, \\
|\alpha_2\rangle &= |0001\rangle, \\
|\alpha_3\rangle &= |0010\rangle, \\
|\alpha_4\rangle &= |0100\rangle, \\
|\alpha_5\rangle &= \frac{1}{\sqrt{2} } (|0110\rangle +|1001\rangle), \\
|\alpha_6\rangle &= \frac{1}{\sqrt{2} } (|0101\rangle +|1010\rangle), \\
|\alpha_7\rangle &= \frac{1}{\sqrt{2} } (|0011\rangle +|1100\rangle), \\
|\alpha_8\rangle &= \frac{1}{\sqrt{2} } (|0120\rangle +|2001\rangle), \\
|\alpha_9\rangle &= \frac{1}{\sqrt{2} } (|0201\rangle +|2010\rangle), \\
|\alpha_{10}\rangle &= \frac{1}{\sqrt{2} } (|0021\rangle +|2100\rangle), \\
|\alpha_{11}\rangle &= \frac{1}{\sqrt{2} } (|0002\rangle +|1110\rangle), \\
|\alpha_{12}\rangle &= \frac{1}{\sqrt{2} } (|0020\rangle +|1101\rangle), \\
|\alpha_{13}\rangle &= \frac{1}{\sqrt{2} } (|0200\rangle +|1011\rangle), \\
|\alpha_{14}\rangle &= \frac{1}{\sqrt{2} } (|0210\rangle +|1002\rangle), \\
|\alpha_{15}\rangle &= \frac{1}{\sqrt{2} } (|0102\rangle +|1020\rangle), \\
|\alpha_{16}\rangle &= \frac{1}{\sqrt{2} } (|0012\rangle +|1200\rangle), \\
|\alpha_{17}\rangle &= \frac{1}{\sqrt{2} } (|0220\rangle +|2002\rangle), \\
|\alpha_{18}\rangle &= \frac{1}{\sqrt{2} } (|0202\rangle +|2020\rangle), \\
|\alpha_{19}\rangle &= \frac{1}{\sqrt{2} } (|0022\rangle +|2200\rangle), \\
|\alpha_{20}\rangle &= \frac{1}{\sqrt{2} } (|0111\rangle +|2000\rangle), \\
\end{aligned}
\end{equation}
\begin{equation*}
\begin{aligned}
|\alpha_{21}\rangle &= \frac{1}{2} (|0112\rangle +|1120\rangle +|1201\rangle +|2011\rangle), \\
|\alpha_{22}\rangle &= \frac{1}{2} (|0121\rangle +|1210\rangle +|2101\rangle +|1012\rangle), \\
|\alpha_{23}\rangle &= \frac{1}{2} (|0122\rangle +|1220\rangle +|2201\rangle +|2012\rangle), \\
|\alpha_{24}\rangle &= \frac{1}{2} (|0211\rangle +|2110\rangle +|1102\rangle +|1021\rangle), \\
\end{aligned}
\end{equation*}
\begin{equation*}
\begin{aligned}
|\alpha_{25}\rangle &= \frac{1}{2} (|0212\rangle +|2120\rangle +|1202\rangle +|2021\rangle), \\
|\alpha_{26}\rangle &= \frac{1}{2} (|0221\rangle +|2210\rangle +|2102\rangle +|1022\rangle), \\
|\alpha_{27}\rangle &= \frac{1}{2} (|0222\rangle +|2220\rangle +|2202\rangle +|2022\rangle).
\end{aligned}
\end{equation*}
\end{example}

\begin{proof}
     Suppose that B, C, and D perform a joint OPLM $\{E=M^{\dagger}M\}$. The post-measurement states should be mutually orthogonal, i.e., $\langle \psi_{i}| I\otimes E |\psi_{j} \rangle = 0$ for $0\le i\ne j \le 27$.

    Note that $\{|\psi_{0}\rangle\}$ is spanned by $\{|\alpha_{0}\rangle\}$ and $\{|\psi_{i} \rangle\}_{i=1}^{27}$ is spanned by $\{|\alpha_{j} \rangle\}_{j=1}^{27}$. The structure of $\{|\alpha_{i} \rangle\}_{i=0}^{27}$ in $A|BCD$ bipartition is shown in Fig. \ref{fig2}. As $\langle \psi_{0}| I\otimes E |\psi_{i} \rangle = 0$ for $1 \le i \le 27$, applying Lemma \ref{lem:zero} to $\{|\psi_{0}\rangle\}$ and $\{|\psi_{i} \rangle\}_{i=1}^{27}$, we obtain $\langle \alpha_{0}| I\otimes E |\alpha_{j} \rangle = 0$ for $1\le j \le 27$, which implies that
     \begin{equation*}\small
        \begin{aligned}
          \langle 000|E|mnl\rangle =\langle mnl|E|000 \rangle = 0,
        \end{aligned}
    \end{equation*}
    where $(m,n,l)\in \mathbb{Z}_3 \times \mathbb{Z}_3 \times \mathbb{Z}_3\backslash (0,0,0)$. Therefore,
    \begin{equation*}
        \begin{aligned}
            &\langle \alpha_{1}| I\otimes E |\alpha_{j} \rangle= 0, ~~2\le j\le 27,\\
            &\langle \alpha_1 |\psi_j \rangle\ne 0, ~~1\le j \le 27.
        \end{aligned}
    \end{equation*}
    Applying Lemma \ref{lem:trivial} to $\{|\psi_{i} \rangle\}_{i=1}^{27}$, we have
    \begin{equation*}
        \begin{aligned}
            &\langle \alpha_{x}| I\otimes E |\alpha_{y} \rangle= 0, ~~1\le x\ne y\le 27,\\
            &\langle \alpha_{x}| I\otimes E |\alpha_{x} \rangle= \langle \alpha_{y}| I\otimes E |\alpha_{y} \rangle, ~~1\le x\ne y \le 27.
        \end{aligned}
    \end{equation*}

    First, we prove that the off-diagonal entries of the matrix $E$ are zeros, i.e., $\langle ijk |E|mnl \rangle=0$, where $(m, n, l) \ne (i, j, k)$ in the next proof.\\
    \textbf{Step 1}
    From $\langle\alpha_{x}| I\otimes E |\alpha_y\rangle = 0$ for $2 \le x \le 4$, we have
    \begin{equation*}\small
        \begin{aligned}
          \langle001| E |mnl\rangle =\langle010| E |mnl\rangle =\langle100| E |mnl\rangle = 0.
        \end{aligned}
    \end{equation*}
    Then $\langle \alpha_{x}| I\otimes E |\alpha_{y} \rangle = 0$ for $5 \le x \le 10$  implies that
    \begin{equation*}\small
        \begin{aligned}
          &\langle110| E |mnl\rangle =\langle101| E |mnl\rangle =\langle011| E |mnl\rangle\\
          = &\langle120| E |mnl\rangle =\langle201| E |mnl\rangle =\langle021| E |mnl\rangle = 0.
        \end{aligned}
    \end{equation*}

    \textbf{Step 2}
    Considering $|\alpha_{x}\rangle$ for $11\le x\le 13$, we have
    \begin{equation*}\small
        \begin{aligned}
          \langle002| E |mnl\rangle =\langle020| E |mnl\rangle =\langle200| E |mnl\rangle = 0.
        \end{aligned}
    \end{equation*}
    Then $\langle \alpha_{x}| I\otimes E |\alpha_{y} \rangle = 0$ for $14 \le x \le 19$ means that
    \begin{equation*}\small
        \begin{aligned}
          &\langle210| E |mnl\rangle =\langle102| E |mnl\rangle =\langle012| E |mnl\rangle\\
          =&\langle220| E |mnl\rangle =\langle202| E |mnl\rangle =\langle022| E |mnl\rangle = 0.
        \end{aligned}
    \end{equation*}
    \textbf{Step 3}
    Based on the above results, for $|\alpha_{x}\rangle$, $20 \le x \le 27$, we have
    \begin{equation*}\small
        \begin{aligned}
          &\langle111| E |mnl\rangle =\langle112| E |mnl\rangle =\langle121| E |mnl\rangle\\
          = &\langle122| E |mnl\rangle=\langle211| E |mnl\rangle =\langle212| E |mnl\rangle\\
          =&\langle221| E |mnl\rangle =\langle222| E |mnl\rangle = 0.
        \end{aligned}
    \end{equation*}

    Next, we prove that all the diagonal entries of the matrix $E$ are equal.\\
    \textbf{Step 1}
    From $\langle 1000| I\otimes E |1000 \rangle = \langle \alpha_{y}| I\otimes E |\alpha_{y} \rangle$ for $2\le y \le 4$, we have
    \begin{equation*}\small
        \begin{aligned}
           \langle000| E |000\rangle = \langle001| E |001\rangle = \langle010| E |010\rangle = \langle100| E |100\rangle.
        \end{aligned}
    \end{equation*}
    Then $\langle 1000| I\otimes E |1000 \rangle = \langle \alpha_{y}| I\otimes E |\alpha_{y} \rangle$ for $5\le y \le 10$ implies that
    \begin{equation*}\small
        \begin{aligned}
           \langle000| E |000\rangle &= \langle110| E |110\rangle = \langle101| E |101\rangle= \langle011| E |011\rangle\\
                                     & = \langle120| E |120\rangle = \langle201| E |201\rangle = \langle021| E |021\rangle.
        \end{aligned}
    \end{equation*}
    \textbf{Step 2}
    Considering $|\alpha_{y}\rangle$ for $11\le y \le 13$, we obtain
     \begin{equation*}\small
        \begin{aligned}
           \langle000| E |000\rangle = \langle002| E |002\rangle = \langle020| E |020\rangle = \langle200| E |200\rangle.
        \end{aligned}
    \end{equation*}
    Then $\langle 1000| I\otimes E |1000 \rangle = \langle \alpha_{y}| I\otimes E |\alpha_{y} \rangle$ for $14\le y \le 19$ means that
    \begin{equation*}\small
        \begin{aligned}
           \langle000| E |000\rangle &= \langle210| E |210\rangle = \langle102| E |102\rangle= \langle012| E |012\rangle\\
                                     &=\langle220| E |220\rangle = \langle202| E |202\rangle = \langle022| E |022\rangle.
        \end{aligned}
    \end{equation*}
    \textbf{Step 3}
    Based on the above results, for $|\alpha_{y}\rangle$, $20\le y \le 27$, we have
    \begin{equation*}\small
        \begin{aligned}
            &\langle000| E |000\rangle= \langle111| E |111\rangle = \langle112| E |112\rangle\\
            = &\langle121| E |121\rangle = \langle122| E |122\rangle= \langle211| E |211\rangle\\
            = &\langle212| E |212\rangle = \langle221| E |221\rangle = \langle222| E |222\rangle.
        \end{aligned}
    \end{equation*}
\begin{figure}[ht!]
	\centering
	\includegraphics[scale=0.1]{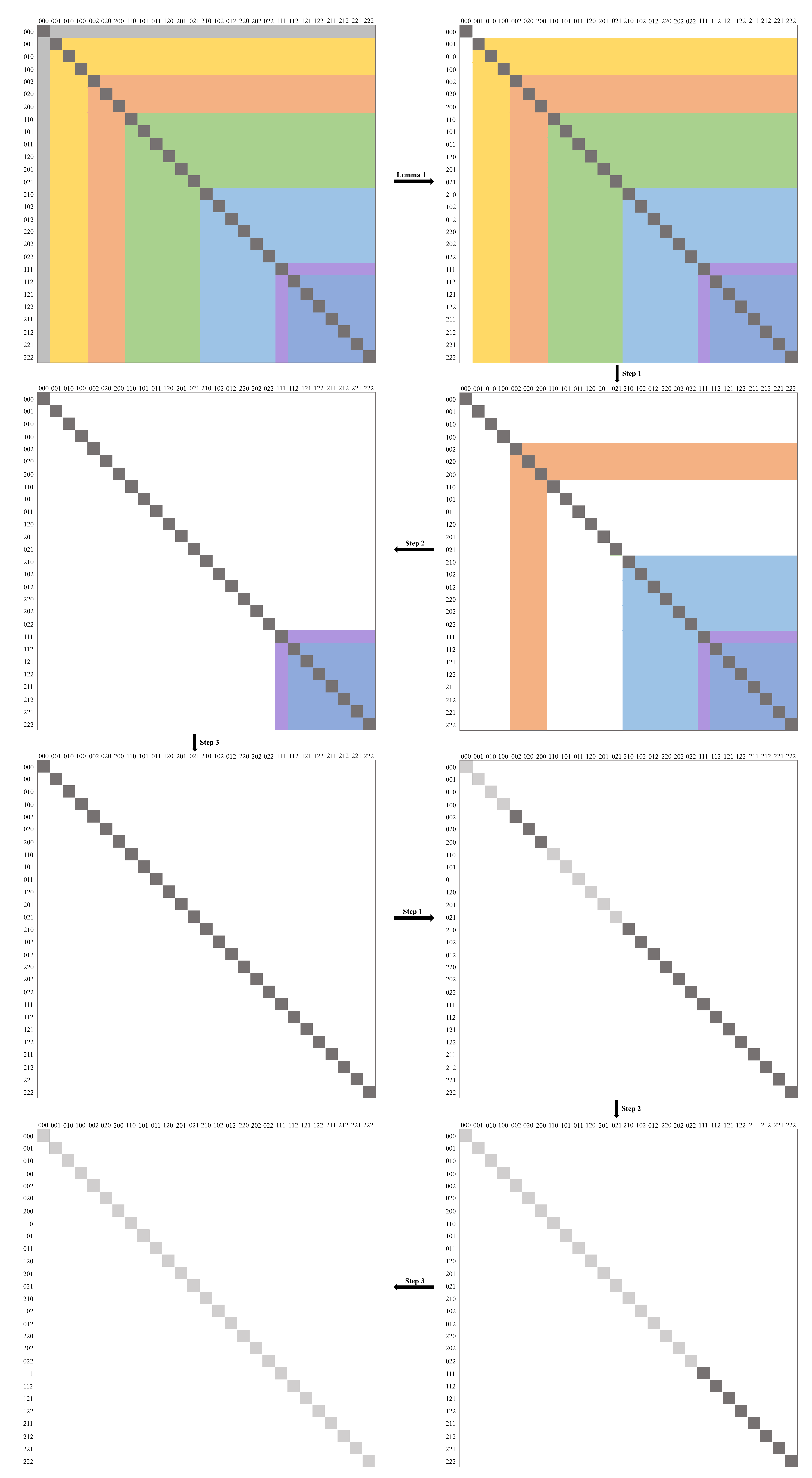}
	\caption{The proof details of strongest nonlocal set in $\mathbb{C}^{3}\otimes \mathbb{C}^{3}\otimes \mathbb{C}^{3}\otimes \mathbb{C}^{3}$. }
    \label{fig3}
\end{figure}
    Thus $E\propto I$. The proof details is described in Fig. \ref{fig3}. This completes the proof.
\end{proof}

Now, we give the strongest nonlocal sets in $\mathbb{C}^{d}\otimes \mathbb{C}^{d}\otimes \mathbb{C}^{d}\otimes \mathbb{C}^{d}$. Define
    \begin{equation*}\small
      \begin{aligned}
      &\mathcal{B}_{1}=\{ |\alpha_1\rangle = |1000\rangle, |\alpha_2\rangle = |0001\rangle,\\
      &~~~~~~~~~ |\alpha_3\rangle = |0010\rangle, |\alpha_4\rangle = |0100\rangle\},\\
      &\mathcal{B}_{2}=\{ \frac{1}{\sqrt{2} } (|0ji0\rangle +|i00j\rangle)~|~ 1\le i,j \le d-1\},\\
      &\mathcal{B}_{3}=\{ \frac{1}{\sqrt{2} } (|0i0j\rangle +|i0j0\rangle)~|~ 1\le i,j \le d-1\},\\
      &\mathcal{B}_{4}=\{ \frac{1}{\sqrt{2} } (|00ij\rangle +|ij00\rangle)~|~ 1\le i,j \le d-1\},\\
      &\mathcal{B}_{5}=\{ \frac{1}{\sqrt{2} } (|000i\rangle +|(i-1) ~ (i-1) ~ (i-1) ~ 0\rangle)~|\\
      &~~~~~~~~~2\le i \le d-1\},\\
      &\mathcal{B}_{6}=\{ \frac{1}{\sqrt{2} } (|00i0\rangle +|(i-1) ~ (i-1) ~ 0 ~ (i-1)\rangle)~|\\
      &~~~~~~~~~2\le i \le d-1\},\\
      \end{aligned}
    \end{equation*}
    \begin{equation*}\small
        \begin{aligned}
      &\mathcal{B}_{7}=\{ \frac{1}{\sqrt{2} } (|0i00\rangle +|(i-1) ~ 0 ~ (i-1) ~ (i-1)\rangle)~|\\
      &~~~~~~~~~2\le i \le d-1\},\\
      &\mathcal{B}_{8}=\{ \frac{1}{\sqrt{2} } (|0~(i-1) ~ (i-1) ~ (i-1)\rangle +|i000\rangle)~|\\
      &~~~~~~~~~2\le i \le d-1\},\\
      &\mathcal{B}_{9}=\{ \frac{1}{2} (|0ijk\rangle +|ijk0\rangle +|jk0i\rangle +|k0ij\rangle) ~|\\
      &~~~~~~~~~1\le i,j,k \le d-1 \backslash (i,j,k)=(1,1,1), \cdots,\\
      &~~~~~~~~~(d-2, d-2, d-2)\}.
      \end{aligned}
    \end{equation*}

\begin{theorem}\label{th:dddd}
    In $\mathbb{C}^{d}\otimes \mathbb{C}^{d}\otimes \mathbb{C}^{d}\otimes \mathbb{C}^{d}$ $(d\ge 2)$, the set $\{|\psi_{i} \rangle\}_{i=0}^{d^3}$ below is a strongest nonlocal set of size $d^3+1$,
    \begin{equation}\label{eq:dddd}
        \begin{aligned}
            &|\psi_{0} \rangle= |\alpha_{0} \rangle, \\
            &|\psi_{i} \rangle= \sum_{j=1}^{d^3} \omega_{d^3}^{ij}~|\alpha_{j} \rangle, ~~~ 1\le i \le d^3,
        \end{aligned}
    \end{equation}
    where
    \begin{equation}
        \begin{aligned}
            &|\alpha_0\rangle= |0000\rangle, \\
            &\{|\alpha_j\rangle \}_{j=1}^{d^3}= \mathcal{B}_{1}\cup \mathcal{B}_{2}\cup \mathcal{B}_{3}\cup \cdots \cup \mathcal{B}_{9}.
        \end{aligned}
    \end{equation}
\end{theorem}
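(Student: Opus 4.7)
The plan is to follow the same template as the proof of Theorem~1 and the preceding four-partite example, exploiting the symmetric structure of $\mathcal{B}_1,\ldots,\mathcal{B}_9$. By that symmetry it is enough to show that any OPLM $\{E=M^{\dagger}M\}$ performed jointly by $B$, $C$, $D$ must satisfy $E\propto I_{BCD}$. The orthogonality-preservation condition gives $\langle\psi_i|I\otimes E|\psi_j\rangle=0$ for $0\le i\ne j\le d^3$, and the spanning structure ($\{|\psi_0\rangle\}$ spanned by $|\alpha_0\rangle=|0000\rangle$ while $\{|\psi_i\rangle\}_{i=1}^{d^3}$ is spanned by $\{|\alpha_j\rangle\}_{j=1}^{d^3}$) allows me to invoke Lemma~\ref{lem:zero} to get $\langle\alpha_0|I\otimes E|\alpha_j\rangle=0$ for every $j\ge 1$. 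Since the $A=0$ parts of the $|\alpha_j\rangle$ together exhaust every nonzero BCD basis vector, this yields $\langle 000|E|mnl\rangle=\langle mnl|E|000\rangle=0$ for every $(m,n,l)\ne(0,0,0)$. This in turn forces $\langle\alpha_1|I\otimes E|\alpha_j\rangle=0$ for $j\ge 2$, and because $\langle\alpha_1|\psi_j\rangle=\omega_{d^3}^{j}\ne 0$ for every $j$, Lemma~\ref{lem:trivial} applied to $\{|\psi_i\rangle\}_{i=1}^{d^3}$ produces $\langle\alpha_x|I\otimes E|\alpha_y\rangle=0$ for $1\le x\ne y\le d^3$ together with a single common value $c=\langle\alpha_x|I\otimes E|\alpha_x\rangle$.

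The remaining work is to propagate these universal relations into the individual matrix entries of $E$. I would do this in four stages matching the block structure of $\mathcal{B}_1$ through $\mathcal{B}_9$. Stage~1 uses the singletons $|0001\rangle, |0010\rangle, |0100\rangle$ in $\mathcal{B}_1$ to give $\langle 001|E|mnl\rangle=\langle 010|E|mnl\rangle=\langle 100|E|mnl\rangle=0$ off the diagonal, together with the corresponding diagonal equalities. Stage~2 uses $\mathcal{B}_2,\mathcal{B}_3,\mathcal{B}_4$: each state $\frac{1}{\sqrt{2}}(|0ji0\rangle+|i00j\rangle)$ and its siblings pair an $A=0$ vector with exactly two nonzero BCD-indices against an $A=i$ vector whose BCD part has only one nonzero index, so the companion cross term is already annihilated by Stage~1, and one recovers $\langle ij0|E|mnl\rangle=\langle i0j|E|mnl\rangle=\langle 0ij|E|mnl\rangle=0$ and the matching diagonal identities. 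Stage~3 processes $\mathcal{B}_5,\mathcal{B}_6,\mathcal{B}_7,\mathcal{B}_8$ inductively in $i=2,\ldots,d-1$: each pairing $\frac{1}{\sqrt{2}}(|000i\rangle+|(i-1)(i-1)(i-1)0\rangle)$ and its three partners isolates one new single-nonzero-entry BCD row of $E$, because Stages~1--2 and the earlier induction steps have already zeroed the row of the three-nonzero companion. Stage~4 uses $\mathcal{B}_9$: at this point every row of $E$ indexed by a BCD vector with at most two nonzero coordinates is known to be zero off the diagonal, so the four-term cyclic sums force $\langle ijk|E|mnl\rangle=0$ and $\langle ijk|E|ijk\rangle=\langle 000|E|000\rangle$ for $1\le i,j,k\le d-1$. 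Combining the four stages gives $E\propto I_{BCD}$.

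The main difficulty I anticipate is combinatorial bookkeeping rather than any new idea. At each sub-step one must check that every ``unwanted'' term of the form $\langle\text{companion}|E|\text{component of }\alpha_y\rangle$ is already guaranteed to vanish by a previously completed stage, which requires tracking exactly which $A$-components appear in every $|\alpha_y\rangle$ across all nine families. In particular one must verify that the exclusions $(i,j,k)=(1,1,1),\ldots,(d-2,d-2,d-2)$ in $\mathcal{B}_9$ precisely match the diagonal three-nonzero vectors absorbed into the Stage~3 pairings, and that the total count $|\mathcal{B}_1\cup\cdots\cup\mathcal{B}_9|=d^3$ so the $A=0$ components really do span the whole BCD basis minus $|000\rangle$. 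Once that bookkeeping is set up, each individual step reduces to the same two-term cancellation followed by an invocation of Lemma~\ref{lem:zero} or Lemma~\ref{lem:trivial} already used in Theorem~1 and the $3\otimes 3\otimes 3\otimes 3$ example, and the derivation of the diagonal equalities runs in exact parallel to the off-diagonal one.
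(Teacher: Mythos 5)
Your overall strategy is the paper's: the Block Zeros Lemma kills the $\langle 000|$ row, the Block Trivial Lemma applied through $|\alpha_1\rangle=|1000\rangle$ gives $\langle\alpha_x|I\otimes E|\alpha_y\rangle=0$ and equal diagonals, and the rest is propagation through the nine families. However, the linear ordering of your propagation stages breaks the dependency chain, and your Stage~2 fails as stated. To extract $\langle ji0|E|mnl\rangle=0$ from the $\mathcal{B}_2$ element $\tfrac{1}{\sqrt{2}}(|0ji0\rangle+|i00j\rangle)$ you must already know that the companion row $\langle 00j|E|\cdot\rangle$ vanishes; Stage~1 delivers this only for $j=1$, since $\mathcal{B}_1$ zeroes precisely the rows $\langle 001|,\langle 010|,\langle 100|$. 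For $j\ge 2$ the row $\langle 00j|$ can only be isolated from the $\mathcal{B}_5$ element $\tfrac{1}{\sqrt{2}}(|000j\rangle+|(j-1)(j-1)(j-1)0\rangle)$ --- which you have deferred to Stage~3 --- and that element in turn needs the row $\langle (j-1)(j-1)0|$ already zeroed, which is a Stage-2 row of smaller index. So the claim that ``the companion cross term is already annihilated by Stage~1'' is false for $j\ge 2$, and Stages~2 and~3, as you have arranged them, are mutually dependent.

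The repair is exactly the interleaved induction the paper runs: Step~1 processes $\mathcal{B}_1$ and then $\mathcal{B}_{2,3,4}$ only for $j=1$; Step~$k$ ($2\le k\le d-1$) first uses $\mathcal{B}_{5,6,7}$ with index $k$ (whose companion rows $\langle (k-1)(k-1)0|$, $\langle (k-1)0(k-1)|$, $\langle 0(k-1)(k-1)|$ were zeroed at step $k-1$) to kill $\langle 00k|,\langle 0k0|,\langle k00|$, and only then applies $\mathcal{B}_{2,3,4}$ with $j=k$; the last step handles $\mathcal{B}_{8}$ and $\mathcal{B}_9$. With that reordering the remainder of your plan --- the parallel diagonal argument, the matching of the excluded diagonal triples in $\mathcal{B}_9$ with the $\mathcal{B}_8$ pairings, and the count $|\mathcal{B}_1\cup\cdots\cup\mathcal{B}_9|=d^3$ --- goes through as you describe.
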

 The proof of Theorem \ref{th:dddd} is given in Appendix \ref{append:B}. \\

Furthermore, we obtain the strongest nonlocal sets in general four-partite systems $\mathbb{C}^{d_{1}}\otimes \mathbb{C}^{d_{2}}\otimes \mathbb{C}^{d_{3}}\otimes \mathbb{C}^{d_{4}}$ $(2\leq d_{1}\leq d_{2}\leq d_{3}\leq d_{4})$. Define
    \begin{equation*}\small
        \begin{aligned}
        &\mathcal{B}_{1}=\{ |\alpha_1\rangle = |1000\rangle, |\alpha_2\rangle = |0001\rangle, \\
        &~~~~~~~~~|\alpha_3\rangle = |0010\rangle, |\alpha_4\rangle = |0100\rangle\},\\
        &\mathcal{B}_{2}=\{ \frac{1}{\sqrt{2} } (|0ji0\rangle +|i00j\rangle)~|~1\le i \le d_1-1,\\
        &~~~~~~~~~1 \le j \le d_2-1\}, \\
        &\mathcal{B}_{3}=\{ |0ji0\rangle~|~d_1\le i \le d_2-1,~ 1 \le j \le d_2-1\},\\
        &\mathcal{B}_{4}=\{ |0ji0\rangle~|~d_2\le i \le d_3-1,~ d_1 \le j \le d_2-1\},\\
        &\mathcal{B}_{5}=\{ \frac{1}{\sqrt{2} } (|0ji0\rangle +|j00i\rangle)~|~d_2\le i \le d_3-1,\\
        &~~~~~~~~~1 \le j \le d_1-1\},\\
        &\mathcal{B}_{6}=\{ \frac{1}{\sqrt{2} } (|0i0j\rangle +|i0j0\rangle)~|~1\le i \le d_1-1,\\
        &~~~~~~~~~1 \le j \le d_3-1\},\\
        &\mathcal{B}_{7}=\{ |0i0j\rangle~|~1\le i \le d_1-1,~d_3 \le j \le d_4-1\},\\
        &\mathcal{B}_{8}=\{ |0i0j\rangle~|~d_1\le i \le d_2-1,~1 \le j \le d_4-1\},\\
        &\mathcal{B}_{9}=\{ \frac{1}{\sqrt{2} } (|00ij\rangle +|ij00\rangle)~|~1\le i \le d_1-1,\\
        &~~~~~~~~~ 1 \le j \le d_2-1\},\\
        &\mathcal{B}_{10}=\{ |00ij\rangle~|~1\le i \le d_1-1,~d_2 \le j \le d_4-1\},\\
        \end{aligned}
    \end{equation*}
    \begin{equation*}\small
        \begin{aligned}
        &\mathcal{B}_{11}=\{ |00ij\rangle~|~d_1\le i \le d_3-1,~1 \le j \le d_4-1\},\\
        &\mathcal{B}_{12}=\{ \frac{1}{\sqrt{2} } (|000i\rangle +|(i-1)~(i-1)~(i-1)~0\rangle)~|\\
        &~~~~~~~~~2\le i \le d_1\},\\
        &\mathcal{B}_{13}=\{ \frac{1}{\sqrt{2} } (|00i0\rangle +|(i-1)~(i-1)~0~(i-1)\rangle)~|\\
        &~~~~~~~~~2\le i \le d_1\},\\
        &\mathcal{B}_{14}=\{ \frac{1}{\sqrt{2} } (|0i00\rangle +|(i-1)~0~(i-1)~(i-1)\rangle)~|\\
        &~~~~~~~~~2\le i \le d_1\},\\
        &\mathcal{B}_{15}=\{ |000i\rangle ~|~d_1+1\le i \le d_4-1\},\\
        &\mathcal{B}_{16}=\{ |00i0\rangle ~|~d_1+1\le i \le d_3-1\},\\
        &\mathcal{B}_{17}=\{ |0i00\rangle ~|~d_1+1\le i \le d_2-1\},\\
        &\mathcal{B}_{18}=\{ \frac{1}{\sqrt{2} } (|0~(i-1)~(i-1)~(i-1)\rangle +|i000\rangle)~|\\
        &~~~~~~~~~2\le i \le d_1-1\},\\
        &\mathcal{B}_{19}=\{ |0~(i-1)~(i-1)~(i-1)\rangle~|~d_1\le i \le d_2\},\\
        &\mathcal{B}_{20}=\{ \frac{1}{2} (|0ijk\rangle +|ijk0\rangle +|jk0i\rangle +|k0ij\rangle)~|\\
        &~~~~~~~~~1\le i,j,k \le d_1-1\\
        &~~~~~~~~ \backslash (i,j,k)=(1,1,1), \cdots ,(d_1-1, d_1-1, d_1-1)\},\\
        &\mathcal{B}_{21}=\{ \frac{1}{\sqrt{3}} (|0ijk\rangle +|ijk0\rangle +|jk0i\rangle)~|~1\le i \le d_1-1,\\
        &~~~~~~~~~1 \le j \le d_1-1,~ d_1 \le k \le d_2-1\},\\
        &\mathcal{B}_{22}=\{ \frac{1}{\sqrt{3}} (|0ijk\rangle +|ijk0\rangle +|k0ij\rangle)~|~1\le i \le d_1-1,\\
        &~~~~~~~~~d_1 \le j \le d_2-1,~1 \le k \le d_1-1\},\\
        &\mathcal{B}_{23}=\{ \frac{1}{\sqrt{3}} (|0ijk\rangle +|jk0i\rangle +|k0ij\rangle)~|~d_1 \le i \le d_2-1,\\
        &~~~~~~~~~1 \le j \le d_1-1,~1 \le k \le d_1-1\},\\
        &\mathcal{B}_{24}=\{ \frac{1}{\sqrt{3}} (|0ijk\rangle +|ijk0\rangle +|i0kj\rangle)~|~1\le i \le d_1-1,\\
        &~~~~~~~~~1 \le j \le d_1-1,~ d_2 \le k \le d_3-1\},\\
        &\mathcal{B}_{25}=\{ \frac{1}{\sqrt{3}} (|0ijk\rangle +|k0ij\rangle +|ik0j\rangle)~|~1\le i \le d_1-1,\\
        &~~~~~~~~~d_2 \le j \le d_3-1,~1 \le k \le d_1-1\},\\
        &\mathcal{B}_{26}=\{ \frac{1}{\sqrt{3}} (|0ijk\rangle +|j0ik\rangle +|ij0k\rangle)~|~1\le i \le d_1-1, \\
        &~~~~~~~~~1 \le j \le d_1-1,~ d_3 \le k \le d_4-1\},\\
        &\mathcal{B}_{27}=\{ \frac{1}{\sqrt{3}} (|0ijk\rangle +|i0jk\rangle +|ij0k\rangle)~|~1\le i \le d_1-1, \\
        &~~~~~~~~~d_1 \le j \le d_2-1,~ d_3 \le k \le d_4-1\},\\
        &\mathcal{B}_{28}=\{ \frac{1}{\sqrt{2}} (|0ijk\rangle +|ijk0\rangle )~|~1\le i \le d_1-1, \\
        &~~~~~~~~~d_1 \le j \le d_2-1,~ d_1 \le k \le d_3-1\},\\
        &\mathcal{B}_{29}=\{ \frac{1}{\sqrt{2}} (|0ijk\rangle +|k0ij\rangle)~|~d_1 \le i \le d_2-1, \\
        &~~~~~~~~~d_1 \le j \le d_3-1,~1 \le k \le d_1-1 \},\\
        &\mathcal{B}_{30}=\{ \frac{1}{\sqrt{2}} (|0ijk\rangle +|jk0i\rangle)~|~d_1 \le i \le d_2-1, \\
        &~~~~~~~~~1 \le j \le d_1-1,~d_1 \le k \le d_2-1\},\\
        \end{aligned}
    \end{equation*}
    \begin{equation*}\small
        \begin{aligned}
        &\mathcal{B}_{31}=\{ \frac{1}{\sqrt{2}} (|0ijk\rangle +|i0jk\rangle)~|~1 \le i \le d_1-1, \\
        &~~~~~~~~~d_2 \le j \le d_3-1,~d_1 \le k \le d_4-1\},\\
        &\mathcal{B}_{32}=\{ \frac{1}{\sqrt{2}} (|0ijk\rangle +|ji0k\rangle)~|~d_1 \le i \le d_2-1, \\
        &~~~~~~~~~1 \le j \le d_1-1, ~d_2 \le k \le d_3-1\},\\
        &\mathcal{B}_{33}=\{ \frac{1}{\sqrt{2}} (|0~ (d_2-1) ~ ij\rangle +|i00j\rangle)~|~ 1 \le i \le d_1-1,\\
        &~~~~~~~~ ~d_3 \le j \le d_4-1\},\\
        &\mathcal{B}_{34}=\{ |0ijk\rangle ~|~d_1 \le i \le d_2-1,~d_2 \le j \le d_3-1, \\
        &~~~~~~~~ ~d_1 \le k \le d_4-1\},\\
        &\mathcal{B}_{35}=\{ |0ijk\rangle ~|~d_1 \le i \le d_2-2,~1 \le j \le d_1-1, \\
        &~~~~~~~~ ~d_3 \le k \le d_4-1\},\\
        &\mathcal{B}_{36}=\{ |0ijk\rangle ~|~d_1\le i,j,k \le d_2-1 \backslash (i,j,k)=(d_1, d_1, d_1),\\
        &~~~~~~~~~\cdots ,(d_2-1, d_2-1, d_2-1)\}.
        \end{aligned}
    \end{equation*}

\begin{theorem}\label{th:d1d2d3d4}
    In $\mathbb{C}^{d_{1}}\otimes \mathbb{C}^{d_{2}}\otimes \mathbb{C}^{d_{3}}\otimes \mathbb{C}^{d_{4}}$~$(2\le d_1 \le d_2 \le d_3 \le d_4)$, the set $\{|\psi_{i} \rangle\}_{i=0}^{d_2d_3d_4}$ below is a strongest nonlocal set of size $d_2d_3d_4+1$,
    \begin{equation}\label{eq:d1d2d3d4}
        \begin{aligned}
            |\psi_{0} \rangle &= |\alpha_{0} \rangle, \\
            |\psi_{i} \rangle &= \sum_{j=1}^{d_2d_3d_4} \omega_{d_2d_3d_4}^{ij}~|\alpha_{j} \rangle, ~~~ 1\le i \le d_2d_3d_4.
        \end{aligned}
    \end{equation}
    where
    \begin{equation}
        \begin{aligned}
            &|\alpha_0\rangle= |0000\rangle, \\
            &\{|\alpha_j\rangle \}_{j=1}^{d_2d_3d_4}= \mathcal{B}_{1}\cup \mathcal{B}_{2}\cup \mathcal{B}_{3}\cup \cdots \cup \mathcal{B}_{36}.
        \end{aligned}
    \end{equation}
\end{theorem}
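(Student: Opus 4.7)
The plan is to mirror exactly the proof of Theorem \ref{th:dddd} and the $\mathbb{C}^3\otimes\mathbb{C}^3\otimes\mathbb{C}^3\otimes\mathbb{C}^3$ example. Suppose parties $B$, $C$, $D$ jointly perform an OPLM with element $E=M^\dagger M$, so that $\langle\psi_i|I\otimes E|\psi_j\rangle=0$ for $0\le i\ne j\le d_2d_3d_4$; the goal is to conclude $E\propto I_{BCD}$.

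First I would isolate $|\psi_0\rangle=|\alpha_0\rangle=|0000\rangle$ and apply Lemma \ref{lem:zero} to the pair $(\{|\psi_0\rangle\},\{|\psi_i\rangle\}_{i=1}^{d_2d_3d_4})$, obtaining $\langle 000|E|mnl\rangle=\langle mnl|E|000\rangle=0$ for every $(m,n,l)\ne(0,0,0)$. Next, since $|\alpha_1\rangle=|1000\rangle$ is the only $|\alpha_j\rangle$ ($j\ge 1$) whose $A$-component lies in row $1$, it satisfies $\langle\alpha_1|\psi_j\rangle\ne 0$ for all $j\ge 1$ and is automatically $(I\otimes E)$-orthogonal to every other $|\alpha_y\rangle$ by the previous step. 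Lemma \ref{lem:trivial} then delivers $\langle\alpha_x|I\otimes E|\alpha_y\rangle=0$ for all $1\le x\ne y\le d_2d_3d_4$ and $\langle\alpha_x|I\otimes E|\alpha_x\rangle=\langle\alpha_y|I\otimes E|\alpha_y\rangle$ for all $x,y\ge 1$. These two families of identities drive everything that follows.

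The substance of the proof is then an induction over the blocks $\mathcal{B}_1,\mathcal{B}_2,\ldots,\mathcal{B}_{36}$ stratified by the Hamming weight of the underlying $BCD$-strings. Block $\mathcal{B}_1$ (weight $1$) immediately kills the rows and columns of $E$ indexed by $001$, $010$, $100$. Every later block is a superposition $\tfrac{1}{\sqrt{k}}\sum_r|\mathrm{string}_r\rangle$, and expanding $\langle\alpha_x|I\otimes E|\alpha_y\rangle=0$ gives $k$ cross-terms $\langle\mathrm{string}_r|E|\mathrm{string}_{r'}\rangle$; the blocks are engineered so that all but one of these cross-terms has already been shown to vanish, forcing the remaining one to vanish as well. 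Concretely, $\mathcal{B}_2$–$\mathcal{B}_{11}$ dispose of the weight-$2$ (and cross weight-$1$/weight-$2$) entries; $\mathcal{B}_{12}$–$\mathcal{B}_{19}$ pair the remaining weight-$1$ strings with the ``diagonal'' weight-$3$ strings $(i-1)(i-1)(i-1)$; and $\mathcal{B}_{20}$–$\mathcal{B}_{36}$ kill every surviving weight-$3$ off-diagonal element through their $3$- or $4$-term superpositions. The diagonal half of the argument runs in parallel: the equality $\langle 1000|I\otimes E|1000\rangle=\langle\alpha_y|I\otimes E|\alpha_y\rangle$ is propagated through the same block ordering, exporting the common diagonal value $\langle 000|E|000\rangle$ to every computational basis element of $BCD$. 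Once both halves are complete we have $E\propto I_{BCD}$, i.e.\ triviality.

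The main obstacle is combinatorial bookkeeping rather than any new idea: with $36$ blocks whose definitions partition the index ranges according to the four parameters $d_1\le d_2\le d_3\le d_4$, one must check that every $BCD$-string $(m,n,l)\in\mathbb{Z}_{d_2}\times\mathbb{Z}_{d_3}\times\mathbb{Z}_{d_4}$ is covered by exactly one block and that a consistent linear ordering of the blocks exists in which every ``bridge'' summand used to cancel a new matrix element has indeed been processed earlier. The delicate cases are the mixed-range blocks $\mathcal{B}_{29}$–$\mathcal{B}_{32}$ and the boundary block $\mathcal{B}_{33}$ (which uses the single index $d_2-1$): they rely on the weight-$2$ blocks $\mathcal{B}_5$–$\mathcal{B}_{11}$ plus the earlier three-term blocks $\mathcal{B}_{20}$–$\mathcal{B}_{28}$ in a specific order. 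Once this ordering is fixed, each individual cancellation is an immediate algebraic identity of exactly the same type already illustrated in the $\mathbb{C}^3\otimes\mathbb{C}^3\otimes\mathbb{C}^3\otimes\mathbb{C}^3$ example, so the full proof is relegated to an appendix.
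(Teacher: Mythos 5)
Your proposal follows essentially the same route as the paper's Appendix C: Lemma \ref{lem:zero} applied to $\{|\psi_0\rangle\}$ versus $\{|\psi_i\rangle\}_{i=1}^{d_2d_3d_4}$, Lemma \ref{lem:trivial} anchored at $|\alpha_1\rangle=|1000\rangle$, and then a cascade through the blocks that kills the off-diagonal entries of $E$ and equalizes the diagonal ones. However, the one substantive organizational claim you make --- that the cascade can be stratified by Hamming weight, i.e.\ $\mathcal{B}_2$--$\mathcal{B}_{11}$ (weight $2$) processed in full before $\mathcal{B}_{12}$--$\mathcal{B}_{19}$ (the remaining weight-$1$ rows) --- is not workable, and this is precisely the spot you flag as the crux. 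The row $\langle 00k|E$ ($2\le k\le d_1$) is obtained from $\mathcal{B}_{12}$ with $i=k$, whose bridge summand $|(k\!-\!1)(k\!-\!1)(k\!-\!1)0\rangle$ requires the weight-$2$ row $\langle (k\!-\!1)(k\!-\!1)0|E$ to have been killed already; that row comes from $\mathcal{B}_2$ with $i=j=k-1$, whose bridge summand $|(k\!-\!1)00(k\!-\!1)\rangle$ in turn requires the weight-$1$ row $\langle 00(k\!-\!1)|E$. So neither ``all weight $2$ first'' nor ``all weight $1$ first'' terminates: the correct order is an interleaved induction on $k$ from $2$ to $d_1$, alternating $\mathcal{B}_{12,13,14}$ (index $i=k$) with $\mathcal{B}_{2,6,9}$ (index $j=k$), followed by a cleanup pass through $\mathcal{B}_{15,16,17}$, the large-$j$ ranges of $\mathcal{B}_{2,6,9}$, $\mathcal{B}_5$, and $\mathcal{B}_{3,4,7,8,10,11}$, and only then $\mathcal{B}_{18}$--$\mathcal{B}_{36}$. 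This is the order the paper uses, and with it each cancellation is the immediate identity you describe. A smaller point: your justification of $\langle\alpha_1|\psi_j\rangle\ne0$ (``$|\alpha_1\rangle$ is the only $|\alpha_j\rangle$ whose $A$-component lies in row $1$'') is misstated, since many blocks contain summands with $A$-index $1$; the correct statement is that $\langle\alpha_1|\psi_j\rangle=\omega_{d_2d_3d_4}^{j}\ne0$ by orthonormality of the $|\alpha_j\rangle$, and $\langle\alpha_1|I\otimes E|\alpha_y\rangle=0$ for $y\ge2$ because only $|\alpha_1\rangle$ contains the basis vector $|1\rangle_A|000\rangle_{BCD}$ and $\langle 000|E|mnl\rangle=0$ for $(m,n,l)\ne(0,0,0)$.
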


The proof of Theorem \ref{th:d1d2d3d4} is given in Appendix \ref{append:C}. \\

\section{Conclusion}\label{sec5}
Quantum nonlocality based on local indistinguishability has been studied extensively. The strongest nonlocal set is the strongest form of nonlocality, which is widely applied in multipartite quantum secret sharing and other cryptographic protocols. In order to improve communication efficiency and save resources, it is of great importance to construct the strongest nonlocal sets whose number of the elements reaches the lower bound for multipartite systems.

In this paper, we found the strongest nonlocal sets with the minimum cardinality in $\mathbb{C}^{d_{1}}\otimes \mathbb{C}^{d_{2}}\otimes \mathbb{C}^{d_{3}}$ and $\mathbb{C}^{d_{1}}\otimes \mathbb{C}^{d_{2}}\otimes \mathbb{C}^{d_{3}}\otimes \mathbb{C}^{d_{4}}$, which further proves the conjecture proposed in Ref. \cite{Li2023}. For clarity, we compare our results with the strongest nonlocal set available (see Table \ref{tab1}). Specifically, each set we constructed is given by genuinely entangled states, except for one product state which is not a stopper state.  Our results may highlight studies on the strongest nonlocal sets with the minimum cardinality in $\otimes _{i=1}^{N}\mathbb{C}^{d_i}$ for $N\geq5$ and the gap between strong nonlocality and strongest nonlocality.

\begin{table*}[htbp]
		\newcommand{\tabincell}[2]{\begin{tabular}{@{}#1@{}}#2\end{tabular}}
		\centering
		\caption{\label{tab1}Results about the strongest nonlocal sets in tripartite and four-partite systems. }
\scalebox{0.86}{
\begin{tabular}{cccc}
			\toprule
			\hline
			\hline
			\specialrule{0em}{1.5pt}{1.5pt}
			~Type~~&~~Systems~~&~~Cardinality~~&~~~References~~\\
			\specialrule{0em}{1.5pt}{1.5pt}
			\midrule
			\hline
			\specialrule{0em}{3.5pt}{3.5pt}
			\tabincell{c}{OGES and $|000\rangle$\\OGES and $|0000\rangle$}&\tabincell{c}{$\mathbb{C}^{d_1}\otimes \mathbb{C}^{d_2}\otimes \mathbb{C}^{d_3}$\\$\mathbb{C}^{d_1}\otimes \mathbb{C}^{d_2}\otimes \mathbb{C}^{d_3}\otimes \mathbb{C}^{d_4}$}&\tabincell{c}{$d_1d_2d_3-(d_1-1)(d_2-1)(d_3-1)$\\~~$d_1d_2d_3d_4-(d_1-1)(d_2-1)(d_3-1)(d_4-1)$}&~~\tabincell{c}{\cite{Li2023} }\\
			\specialrule{0em}{3.5pt}{3.5pt}
			\tabincell{c}{OGES}&\tabincell{c}{$\mathbb{C}^{3}\otimes \mathbb{C}^{3}\otimes \mathbb{C}^{3}$\\$\mathbb{C}^{3}\otimes \mathbb{C}^{3}\otimes \mathbb{C}^{3}\otimes \mathbb{C}^{3}$}&\tabincell{c}{$2\times 3^2$\\$2\times 3^3$}&~~\tabincell{c}{\cite{Hu2024} }\\
			\specialrule{0em}{3.5pt}{3.5pt}
            \tabincell{c}{OGES and $|000\rangle$\\OGES and $|0000\rangle$}&\tabincell{c}{$\mathbb{C}^{d_1}\otimes \mathbb{C}^{d_2}\otimes \mathbb{C}^{d_3}$\\$\mathbb{C}^{d}\otimes \mathbb{C}^{d}\otimes \mathbb{C}^{d}\otimes \mathbb{C}^{d}$}&\tabincell{c}{$d_2d_3+d_1-1$\\$d^3+d-1$}&~~\tabincell{c}{\cite{Shi2023} }\\
			\specialrule{0em}{3.5pt}{3.5pt}		
			\tabincell{c}{OGES and stopper states}&\tabincell{c}{$\mathbb{C}^{d_1}\otimes \mathbb{C}^{d_2}\otimes \mathbb{C}^{d_3}$}&\tabincell{c}{$d_2d_3+1$~(lower bound)}&~~\tabincell{c}{\cite{Zhen2024} }\\	
			\specialrule{0em}{3.5pt}{3.5pt}
			\tabincell{c}{OGES and $|000\rangle$}&\tabincell{c}{$\mathbb{C}^{d_1}\otimes \mathbb{C}^{d_2}\otimes \mathbb{C}^{d_3}$}&\tabincell{c}{$d_2d_3+1$~(lower bound)}&~~\tabincell{c}{Theorem~\ref{th:d1d2d3}}\\
			\specialrule{0em}{3.5pt}{3.5pt}
			\tabincell{c}{OGES and $|0000\rangle$}&\tabincell{c}{$\mathbb{C}^{d}\otimes \mathbb{C}^{d}\otimes \mathbb{C}^{d}\otimes \mathbb{C}^{d}$\\$\mathbb{C}^{d_1}\otimes \mathbb{C}^{d_2}\otimes \mathbb{C}^{d_3}\otimes \mathbb{C}^{d_4}$}&\tabincell{c}{$d^3+1$~(lower bound)\\$d_2d_3d_4+1$~(lower bound)}&~~\tabincell{c}{Theorem~\ref{th:dddd}\\Theorem~\ref{th:d1d2d3d4}}\\		
			\bottomrule
			\specialrule{0em}{1.5pt}{1.5pt}
			\hline
			\hline
		\end{tabular}}
	\end{table*}

\section*{Acknowledgments}
This work is supported by Basic Research Project of Shijiazhuang Municipal Universities in Hebei Province (241790697A), Natural Science Foundation of Hebei Province (F2021205001), NSFC (Grant Nos. 62272208, 12171044 and 12075159), the specific research fund of the Innovation Platform for Academicians of Hainan Province under Grant No. YSPTZX202215.
arxivpdfoutput

\bibliographystyle{quantum}
\bibliography{main}

\onecolumn
\appendix
\section{Proof of Theorem \ref{th:d1d2d3}}\label{append:A}
\begin{proof}
  Suppose that B and C perform a joint OPLM $\{E=M^{\dagger}M\}$. The post-measurement states should be mutually orthogonal, i.e., $\langle \psi_{i}| I\otimes E |\psi_{j} \rangle = 0$ for $0\le i\ne j \le d_2d_3$.

 Note that $\{|\psi_{0} \rangle\}$ is spanned by $\{|\alpha_{0}\rangle\}$ and $\{|\psi_{i} \rangle\}_{i=1}^{d_2d_3}$ is spanned by $\{|\alpha_{j} \rangle\}_{j=1}^{d_2d_3}$. As $\langle \psi_{0}| I\otimes E |\psi_{i} \rangle = 0$ for $1 \le i \le d_2d_3$, applying Lemma \ref{lem:zero} to $\{|\psi_{0}\rangle\}$ and $\{|\psi_{i} \rangle\}_{i=1}^{d_2d_3}$, we obtain that $\langle \alpha_{0}| I\otimes E |\alpha_{j} \rangle = 0$ for $1\le j \le d_2d_3$, which implies that
  \begin{equation*}
        \begin{aligned}
            \langle 00|E|mn\rangle =\langle mn|E|00\rangle = 0,
        \end{aligned}
    \end{equation*}
  where $(m,n)\in \mathbb{Z}_{d_2} \times \mathbb{Z}_{d_3}\backslash (0,0)$. Hence, we have
    \begin{equation*}
        \begin{aligned}
            &\langle \alpha_{1}| I\otimes E |\alpha_{j} \rangle = 0, ~~2\le j\le d_2d_3,\\
            &\langle \alpha_1 |\psi_j \rangle \ne 0, ~~1\le j \le d_2d_3.
        \end{aligned}
    \end{equation*}
 Applying Lemma \ref{lem:trivial} to $\{|\psi_{i} \rangle\}_{i=1}^{d_2d_3}$, we obtain
\begin{equation*}
       \begin{aligned}
            &\langle \alpha_{x}| I\otimes E |\alpha_{y} \rangle = 0, ~~1\le x\ne y\le d_2d_3,\\
            &\langle \alpha_{x}| I\otimes E |\alpha_{x} \rangle = \langle \alpha_{y}| I\otimes E |\alpha_{y} \rangle, ~~1\le x\ne y \le d_2d_3.
       \end{aligned}
    \end{equation*}

    First, we prove that the off-diagonal entries of the matrix $E$ are zeros.\\
    \textbf{Step 1}
    From $\langle001| I\otimes E |\alpha_y\rangle =\langle010| I\otimes E |\alpha_y\rangle = 0$, we have
    \begin{equation*}
        \begin{aligned}
            \langle01| E |mn\rangle =\langle10| E |mn\rangle = 0.
        \end{aligned}
    \end{equation*}
    \textbf{Step \bm{$k~(2\le k \le d_1)$}}
    From the ($k-1$)-th step, we get $\langle 0~(k-1)| E |mn\rangle =\langle (k-1)~0| E |mn\rangle =0$. Then $\langle \alpha_{x}| I\otimes E |\alpha_{y} \rangle = 0$ where $|\alpha_{x} \rangle \in \mathcal{B}_{2,3}$ with $i=k$ implies that
     \begin{equation*}
       \begin{aligned}
            \langle 0k| E |mn\rangle =\langle k0| E |mn\rangle=0,~2\le k \le d_1.
       \end{aligned}
    \end{equation*}
    \textbf{Step \bm{$d_1+1$}}
    For $|\alpha_{x} \rangle \in \mathcal{B}_{4,5}$, we obtain
    \begin{equation*}
       \begin{aligned}
           \langle 0i| E |mn\rangle =0,~d_1+1\le i \le d_3-1; \\
           \langle i0| E |mn\rangle = 0,~d_1+1\le i \le d_2-1.
       \end{aligned}
    \end{equation*}
    \textbf{Step \bm{$d_1+2$}}
    According to the above results, for $|\alpha_{x} \rangle \in \mathcal{B}_{6-13}$, we have
    \begin{equation*}
       \begin{aligned}
            \langle ij| E |mn\rangle = 0,~1\le i\le d_2-1, 1\le j\le d_3-1.
       \end{aligned}
    \end{equation*}

    Next, we prove that all diagonal entries of the matrix $E$ are equal.\\
    \textbf{Step 1}
    From $\langle100| I\otimes E |100\rangle =\langle001| I\otimes E |001\rangle =\langle010| I\otimes E |010\rangle $, we have
    \begin{equation*}
       \begin{aligned}
            \langle00| E |00\rangle =\langle01| E |01\rangle=\langle10| E |10\rangle.
       \end{aligned}
    \end{equation*}
    \textbf{Step \bm{$k~(2\le k \le d_1)$}}
    From the ($k-1$)-th step, we get $\langle00| E |00\rangle =\langle0~(k-1)| E |0~(k-1)\rangle=\langle (k-1)~0| E |(k-1)~0\rangle$. Then $\langle100| I\otimes E |100\rangle =\langle \alpha_{y}| I\otimes E |\alpha_{y} \rangle$ where $|\alpha_{y} \rangle \in \mathcal{B}_{2,3}$ with $i=k$ implies that
    \begin{equation*}
       \begin{aligned}
            \langle00| E |00\rangle =\langle 0k| E |0k\rangle=\langle k0| E |k0\rangle,~2\le k \le d_1.
       \end{aligned}
    \end{equation*}
    \textbf{Step \bm{$d_1+1$}}
    For $|\alpha_{y} \rangle \in \mathcal{B}_{4,5}$, we obtain
    \begin{equation*}
       \begin{aligned}
            \langle00| E |00\rangle=\langle0i| E |0i\rangle,~d_1+1\le i \le d_3-1;\\
            \langle00| E |00\rangle=\langle i0| E |i0\rangle,~d_1+1\le i \le d_2-1.
       \end{aligned}
    \end{equation*}
    \textbf{Step \bm{$d_1+2$}}
    Based on the above results, for $|\alpha_{x} \rangle \in \mathcal{B}_{6-13}$, we have
    \begin{equation*}
       \begin{aligned}
            \langle00| E |00\rangle =\langle ij| E |ij\rangle,~1\le i\le d_2-1, 1\le j\le d_3-1.
       \end{aligned}
    \end{equation*}

    Thus $E\propto I$. This completes the proof.
\end{proof}

\section{Proof of Theorem \ref{th:dddd}}\label{append:B}
\begin{proof}
     Suppose that $B$, $C$, and $D$ perform a joint OPLM $\{E=M^{\dagger}M\}$. The post-measurement states should be mutually orthogonal, i.e., $\langle \psi_{i}| I\otimes E |\psi_{j} \rangle = 0$ for $0\le i\ne j \le d^3$.

    Note that $\{|\psi_{0}\rangle\}$ is spanned by $\{|\alpha_{0}\rangle\}$ and $\{|\psi_{i} \rangle\}_{i=1}^{d^3}$ is spanned by $\{|\alpha_{j} \rangle\}_{j=1}^{d^3}$. As $\langle \psi_{0}| I\otimes E |\psi_{j} \rangle = 0$ for $0\le i\ne j \le d^3$, applying Lemma \ref{lem:zero} to $\{|\psi_{0}\rangle\}$ and $\{|\psi_{i} \rangle\}_{i=1}^{d^3}$, we obtain $\langle \alpha_{0}| I\otimes E |\alpha_{j} \rangle = 0$ for $1\le j \le d^3$, which implies that
    \begin{equation*}
        \begin{aligned}
            \langle 000|E|mnl\rangle =\langle mnl|E|000\rangle =0,
        \end{aligned}
    \end{equation*}
    where $(m,n,l)\in \mathbb{Z}_3 \times \mathbb{Z}_3 \times \mathbb{Z}_3\backslash (0,0,0)$. Therefore,
    \begin{equation*}
        \begin{aligned}
            &\langle \alpha_{1}| I\otimes E |\alpha_{j} \rangle = 0, ~~2\le j\le d^3,\\
            &\langle \alpha_1 |\psi_j \rangle \ne 0, ~~1\le j \le d^3.
        \end{aligned}
    \end{equation*}
    Applying Lemma \ref{lem:trivial} to $\{|\psi_{i} \rangle\}_{i=1}^{d^3}$, we have
    \begin{equation*}
        \begin{aligned}
            \langle \alpha_{x}| I\otimes E |\alpha_{y} \rangle &= 0, ~~1\le x\ne y\le d^3,\\
            \langle \alpha_{x}| I\otimes E |\alpha_{x} \rangle &= \langle \alpha_{y}| I\otimes E |\alpha_{y} \rangle, ~~1\le x\ne y \le d^3.
        \end{aligned}
    \end{equation*}

    First, we proof that the off-diagonal entries of $E$ are zeros.\\
    \textbf{Step 1}
    From $\langle \alpha_{x}| I\otimes E |\alpha_{y} \rangle = 0$ for $2 \le x \le 4$, we have
    \begin{equation*}
        \begin{aligned}
            \langle001| E |mnl\rangle =\langle010| E |mnl\rangle =\langle100| E |mnl\rangle = 0.
        \end{aligned}
    \end{equation*}
    Then $\langle \alpha_{x}| I\otimes E |\alpha_{y} \rangle = 0$ where $|\alpha_{x} \rangle \in \mathcal{B}_{2,3,4}$ with $j=1$ implies that
    \begin{equation*}
        \begin{aligned}
            \langle 1i0| E |mnl\rangle =\langle i01| E |mnl\rangle =\langle 0i1| E |mnl\rangle = 0,~1 \le i \le d-1.
        \end{aligned}
    \end{equation*}
    \textbf{Step \bm{$k~(2\le k \le d-1)$}}
    For $|\alpha_{x} \rangle \in \mathcal{B}_{5,6,7}$ with $i=k$, we obtain
    \begin{equation*}
        \begin{aligned}
            \langle 00k| E |mnl\rangle =\langle 0k0| E |mnl\rangle =\langle k00| E |mnl\rangle = 0,~2\le k \le d-1.
        \end{aligned}
    \end{equation*}
    Then $\langle \alpha_{x}| I\otimes E |\alpha_{y} \rangle = 0$ where $|\alpha_{x} \rangle \in \mathcal{B}_{2,3,4}$ with $j=k$ implies that
    \begin{equation*}
        \begin{aligned}
            \langle ki0| E |mnl\rangle =\langle i0k| E |mnl\rangle =\langle 0ik| E |mnl\rangle = 0,~1 \le i \le d-1, ~2\le k \le d-1.
        \end{aligned}
    \end{equation*}
    \textbf{Step \bm{$d$}}
    According to the above results, with respect to $|\alpha_{x} \rangle \in \mathcal{B}_{8,9}$, we have
    \begin{equation*}
        \begin{aligned}
            \langle ijk| E |mnl\rangle = 0,~1\le i,j,k \le d-1.
        \end{aligned}
    \end{equation*}

    Next, we prove that all the diagonal entries $E$ are equal.\\
    \textbf{Step 1}
    From $\langle \alpha_{1}| I\otimes E |\alpha_{1} \rangle = \langle \alpha_{y}| I\otimes E |\alpha_{y} \rangle$ for $2\le y \le 4$, we have
    \begin{equation*}
        \begin{aligned}
            \langle000| E |000\rangle = \langle001| E |001\rangle = \langle010| E |010\rangle = \langle100| E |100\rangle.
        \end{aligned}
    \end{equation*}
    Then $\langle \alpha_{1}| I\otimes E |\alpha_{1} \rangle = \langle \alpha_{y}| I\otimes E |\alpha_{y} \rangle$ where $|\alpha_{y} \rangle \in \mathcal{B}_{2,3,4}$ with $j=1$ implies that
    \begin{equation*}
        \begin{aligned}
            \langle000| E |000\rangle = \langle1i0| E |1i0\rangle = \langle i01| E |i01\rangle = \langle0i1| E |0i1\rangle,~1\le i \le d-1.
        \end{aligned}
    \end{equation*}
    \textbf{Step \bm{$k~(2\le k \le d-1)$}}
   For $|\alpha_{y} \rangle \in \mathcal{B}_{5,6,7}$ with $i=k$, we obtain
    \begin{equation*}
        \begin{aligned}
            \langle000| E |000\rangle = \langle00k| E |00k\rangle = \langle0k0| E |0k0\rangle = \langle k00| E |k00\rangle,~2\le k \le d-1.
        \end{aligned}
    \end{equation*}
    Then considering $|\alpha_{y} \rangle \in \mathcal{B}_{2,3,4}$ with $j=k$, we have
    \begin{equation*}
        \begin{aligned}
            \langle000| E |000\rangle = \langle ki0| E |ki0\rangle = \langle i0k| E |i0k\rangle = \langle0ik| E |0ik\rangle,~1 \le i \le d-1, ~2\le k \le d-1.
        \end{aligned}
    \end{equation*}
    \textbf{Step \bm{$d$}}
    Based on the above results, for $|\alpha_{y} \rangle \in \mathcal{B}_{8,9}$, we have
    \begin{equation*}
        \begin{aligned}
            \langle000| E |000\rangle = \langle ijk| E |ijk\rangle,~1\le i,j,k \le d-1.
        \end{aligned}
    \end{equation*}

    Thus $E\propto I$. This completes the proof.
\end{proof}

\section{Proof of Theorem \ref{th:d1d2d3d4}}\label{append:C}
\begin{proof}
    Suppose that $B$, $C$, and $D$ perform a joint OPLM $\{E=M^{\dagger}M\}$. The post-measurement states should be mutually orthogonal, i.e., $\langle \psi_{i}| I\otimes E |\psi_{j} \rangle = 0$ for $0\le i\ne j \le d_2d_3d_4$.

    Note that $\{|\psi_{0}\rangle\}$ is spanned by $\{|\alpha_{0}\rangle\}$ and $\{|\psi_{i} \rangle\}_{i=1}^{d_2d_3d_4}$ is spanned by $\{|\alpha_{j} \rangle\}_{j=1}^{d_2d_3d_4}$. As $\langle \psi_{0}| I\otimes E |\psi_{j} \rangle = 0$ for $0\le j \le d_2d_3d_4$, applying Lemma \ref{lem:zero} to the set $\{|\psi_{0}\rangle\}$ and $\{|\psi_{i} \rangle\}_{i=1}^{d_2d_3d_4}$, we obtain $\langle \alpha_{0}| I\otimes E |\alpha_{j} \rangle = 0$ for $1\le j \le d_2d_3d_4$, which implies that
    \begin{equation*}
        \begin{aligned}
            \langle 000|E|mnl\rangle =\langle mnl|E|000\rangle = 0,
        \end{aligned}
    \end{equation*}
    where $(m,n,l)\in \mathbb{Z}_3 \times \mathbb{Z}_3 \times \mathbb{Z}_3\backslash (0,0,0)$. Therefore,
    \begin{equation*}
        \begin{aligned}
            &\langle \alpha_{1}| I\otimes E |\alpha_{j} \rangle= 0, ~~2\le j\le d_2d_3d_4,\\
            &\langle \alpha_1 |\psi_j \rangle\ne 0, ~~1\le j \le d_2d_3d_4.
        \end{aligned}
    \end{equation*}
    Applying Lemma \ref{lem:trivial} to $\{|\psi_{i} \rangle\}_{i=1}^{d_2d_3d_4}$, we have
    \begin{equation*}
        \begin{aligned}
            &\langle \alpha_{x}| I\otimes E |\alpha_{y} \rangle= 0, ~~1\le x\ne y\le d_2d_3d_4,\\
            &\langle \alpha_{x}| I\otimes E |\alpha_{x} \rangle= \langle \alpha_{y}| I\otimes E |\alpha_{y} \rangle, ~~1\le x\ne y \le d_2d_3d_4.
        \end{aligned}
    \end{equation*}

    First, we prove that the off-diagonal entries of $E$ are zeros.\\
    \textbf{Step 1}
    From $\langle \alpha_{x}| I\otimes E |\alpha_{y} \rangle = 0$ for $2 \le x \le 4$, we have
    \begin{equation*}
        \begin{aligned}
            \langle001| E |mnl\rangle =\langle010| E |mnl\rangle =\langle100| E |mnl\rangle = 0.
        \end{aligned}
    \end{equation*}
    Then $\langle \alpha_{x}| I\otimes E |\alpha_{y} \rangle = 0$ where $|\alpha_{x} \rangle \in \mathcal{B}_{2,6,9}$ with $j=1$ implies that
    \begin{equation*}
        \begin{aligned}
            \langle 1i0| E |mnl\rangle =\langle i01| E |mnl\rangle =\langle 0i1| E |mnl\rangle = 0,~1 \le i \le d_1-1.
        \end{aligned}
    \end{equation*}
    \textbf{Step \bm{$k~(2\le k \le d_1)$}}
    For $|\alpha_{x} \rangle \in \mathcal{B}_{12,13,14}$ with $i=k$, we obtain
    \begin{equation*}
        \begin{aligned}
            \langle 00k| E |mnl\rangle =\langle 0k0| E |mnl\rangle =\langle k00| E |mnl\rangle = 0,~2\le k \le d_1.
        \end{aligned}
    \end{equation*}
    Then $\langle \alpha_{x}| I\otimes E |\alpha_{y} \rangle = 0$ where $|\alpha_{x} \rangle \in \mathcal{B}_{2,6,9}$ with $j=k$ implies that
    \begin{equation*}
        \begin{aligned}
            \langle ki0| E |mnl\rangle =\langle i0k| E |mnl\rangle =\langle 0ik| E |mnl\rangle = 0,~1 \le i \le d_1-1, ~2\le k \le d_1.
        \end{aligned}
    \end{equation*}
    \textbf{Step \bm{$d_1+1$}}
    For $|\alpha_{x} \rangle \in \mathcal{B}_{15,16,17}$, we have
    \begin{equation*}
        \begin{aligned}
            \langle 00i| E |mnl\rangle = 0, ~d_1+1\le i \le d_4-1;\\
            \langle 0i0| E |mnl\rangle = 0, ~d_1+1\le i \le d_3-1;\\
            \langle i00| E |mnl\rangle = 0, ~d_1+1\le i \le d_2-1.
        \end{aligned}
    \end{equation*}
    Then $\langle \alpha_{x}| I\otimes E |\alpha_{y} \rangle = 0$ where $|\alpha_{x} \rangle \in \mathcal{B}_{2,6,9}$ with $j\ge d_1 +1$ means that
    \begin{equation*}
        \begin{aligned}
            \langle ji0| E |mnl\rangle = 0, ~1\le i \le d_1-1, d_1+1\le j \le d_2-1;\\
            \langle i0j| E |mnl\rangle = 0, ~1\le i \le d_1-1, d_1+1\le j \le d_3-1;\\
            \langle 0ij| E |mnl\rangle = 0, ~1\le i \le d_1-1, d_1+1\le j \le d_2-1,
        \end{aligned}
    \end{equation*}
    and $\langle \alpha_{x}| I\otimes E |\alpha_{y} \rangle = 0$ where $|\alpha_{x} \rangle \in \mathcal{B}_{5}$ means that
    \begin{equation*}
        \begin{aligned}
            \langle ji0| E |mnl\rangle = 0, ~d_2\le i \le d_3-1,~1\le j \le d_1-1.
        \end{aligned}
    \end{equation*}
    For $|\alpha_{x} \rangle \in \mathcal{B}_{3,4,7,8,10,11}$, we have
    \begin{equation*}
        \begin{aligned}
            \langle ji0| E |mnl\rangle = 0, ~d_1\le i \le d_2-1, ~~1\le j \le d_2-1;\\
            \langle ji0| E |mnl\rangle = 0, ~d_2\le i \le d_3-1, ~d_1\le j \le d_2-1;\\
            \langle i0j| E |mnl\rangle = 0, ~~1\le i \le d_1-1, ~d_3\le j \le d_4-1;\\
            \langle i0j| E |mnl\rangle = 0, ~d_1\le i \le d_2-1, ~~1\le j \le d_4-1;\\
            \langle 0ij| E |mnl\rangle = 0, ~~1\le i \le d_1-1, ~d_2\le j \le d_4-1;\\
            \langle 0ij| E |mnl\rangle = 0, ~d_1\le i \le d_3-1, ~~1\le j \le d_4-1.
        \end{aligned}
    \end{equation*}
    \textbf{Step \bm{$d_1+2$}}
    According to the above results, with respect to $|\alpha_{x} \rangle \in \mathcal{B}_{18-36}$, we have
    \begin{equation*}
        \begin{aligned}
            &\langle ijk| E |mnl\rangle = 0,~1 \le i \le d_2-1, ~1 \le j \le d_3-1, ~1 \le k \le d_4-1.
        \end{aligned}
    \end{equation*}

    Next, we prove that all the diagonal entries of $E$ are equal.\\
    \textbf{Step 1}
    From $\langle \alpha_{1}| I\otimes E |\alpha_{1} \rangle = \langle \alpha_{y}| I\otimes E |\alpha_{y} \rangle$ for $2 \le y \le 4$, we have
    \begin{equation*}
        \begin{aligned}
            \langle000| E |000\rangle = \langle001| E |001\rangle =\langle010| E |010\rangle =\langle100| E |100\rangle.
        \end{aligned}
    \end{equation*}
    Then $\langle \alpha_{1}| I\otimes E |\alpha_{1} \rangle = \langle \alpha_{y}| I\otimes E |\alpha_{y} \rangle$ where $|\alpha_{y} \rangle \in \mathcal{B}_{2,6,9}$ with $j=1$ implies that
    \begin{equation*}
        \begin{aligned}
            \langle000| E |000\rangle = \langle 1i0| E |1i0\rangle =\langle i01| E |i01\rangle =\langle 0i1| E |0i1\rangle,~1 \le i \le d_1-1.
        \end{aligned}
    \end{equation*}
    \textbf{Step \bm{$k~(2\le k \le d_1)$}}
    For $|\alpha_{y} \rangle \in \mathcal{B}_{12,13,14}$ with $i=k$, we obtain
    \begin{equation*}
        \begin{aligned}
            \langle000| E |000\rangle =\langle 00k| E |00k\rangle =\langle 0k0| E |0k0\rangle =\langle k00| E |k00\rangle,~2\le k \le d_1.
        \end{aligned}
    \end{equation*}
    Then $\langle \alpha_{1}| I\otimes E |\alpha_{1} \rangle = \langle \alpha_{y}| I\otimes E |\alpha_{y} \rangle$ where $|\alpha_{y} \rangle \in \mathcal{B}_{2,6,9}$ with $j=k$ implies that
    \begin{equation*}
        \begin{aligned}
            \langle000| E |000\rangle = \langle ki0| E |ki0\rangle =\langle i0k| E |i0k\rangle =\langle 0ik| E |0ik\rangle,~1 \le i \le d_1-1, ~2\le k \le d_1.
        \end{aligned}
    \end{equation*}
    \textbf{Step \bm{$d_1+1$}}
    For $|\alpha_{y} \rangle \in \mathcal{B}_{15,16,17}$, we have
    \begin{equation*}
        \begin{aligned}
            \langle000| E |000\rangle =\langle00i| E |00i\rangle, ~d_1+1\le i \le d_4-1;\\
            \langle000| E |000\rangle =\langle0i0| E |0i0\rangle, ~d_1+1\le i \le d_3-1;\\
            \langle000| E |000\rangle =\langle i00| E |i00\rangle, ~d_1+1\le i \le d_2-1.
        \end{aligned}
    \end{equation*}
    Then $\langle \alpha_{1}| I\otimes E |\alpha_{1} \rangle = \langle \alpha_{y}| I\otimes E |\alpha_{y} \rangle$ where $|\alpha_{y} \rangle \in \mathcal{B}_{2,6,9}$ with $j\ge d_1+1$ means that
    \begin{equation*}
        \begin{aligned}
            \langle000| E |000\rangle =\langle ji0| E |mnl\rangle, ~1\le i \le d_1-1, ~d_1+1\le j \le d_2-1;\\
            \langle000| E |000\rangle =\langle i0j| E |mnl\rangle, ~1\le i \le d_1-1, ~d_1+1\le j \le d_3-1;\\
            \langle000| E |000\rangle =\langle 0ij| E |mnl\rangle, ~1\le i \le d_1-1, ~d_1+1\le j \le d_2-1,
        \end{aligned}
    \end{equation*}
    and $\langle \alpha_{1}| I\otimes E |\alpha_{1} \rangle = \langle \alpha_{y}| I\otimes E |\alpha_{y} \rangle$ where $|\alpha_{y} \rangle \in \mathcal{B}_{5}$ means that
    \begin{equation*}
        \begin{aligned}
            \langle000| E |000\rangle =\langle ji0| E |mnl\rangle,~d_2\le i \le d_3-1, ~1\le j \le d_1-1.
        \end{aligned}
    \end{equation*}
    For $|\alpha_{y} \rangle \in \mathcal{B}_{3,4,7,8,10,11}$, we have
    \begin{equation*}
        \begin{aligned}
            \langle000| E |000\rangle =\langle ji0| E |ji0\rangle, ~~&d_1\le i \le d_2-1, ~1\le j \le d_2-1;\\
            \langle000| E |000\rangle =\langle ji0| E |ji0\rangle,~~&d_2\le i \le d_3-1, d_1\le j \le d_2-1;\\
            \langle000| E |000\rangle =\langle i0j| E |i0j\rangle, ~~&~1\le i \le d_1-1, ~d_3\le j \le d_4-1;\\
            \langle000| E |000\rangle =\langle i0j| E |i0j\rangle,~~&d_1\le i \le d_2-1, ~1\le j \le d_4-1;\\
            \langle000| E |000\rangle =\langle 0ij| E |0ij\rangle, ~~&~1\le i \le d_1-1, ~d_2\le j \le d_4-1;\\
            \langle000| E |000\rangle =\langle 0ij| E |0ij\rangle, ~~&d_1\le i \le d_3-1, ~1\le j \le d_4-1.
        \end{aligned}
    \end{equation*}
    \textbf{Step \bm{$d_1+2$}}
    Based on the above results, with respect to $|\alpha_{y} \rangle \in \mathcal{B}_{18-36}$, we have
    \begin{equation*}
        \begin{aligned}
            \langle000| E |000\rangle =\langle ijk| E |ijk\rangle,~1 \le i \le d_2-1, ~1 \le j \le d_3-1, ~1 \le k \le d_4-1.
        \end{aligned}
    \end{equation*}

    Thus $E\propto I$. This completes the proof.
\end{proof}

\end{document}